\documentclass{article}
\usepackage[utf8]{inputenc}

\usepackage{hyperref}
\newcommand{\footremember}[2]{%
    \footnote{#2}
    \newcounter{#1}
    \setcounter{#1}{\value{footnote}}%
}
\newcommand{\footrecall}[1]{%
    \footnotemark[\value{#1}]%
} 

\usepackage{microtype}
\usepackage{graphicx}
\usepackage{subfigure}
\usepackage{booktabs}
\usepackage{float}
\usepackage{booktabs}
\usepackage{hyperref}
\usepackage{geometry}
\geometry{legalpaper, margin=1in}

\usepackage{algorithm2e}
\usepackage{latexsym}
\usepackage{amscd}
\usepackage{amsmath}
\usepackage{amssymb}
\usepackage{amsthm}
\usepackage{mathrsfs}
\usepackage{algorithmic}

\usepackage{cleveref}
\theoremstyle{plain}
\newtheorem{theorem}{Theorem}[section]
\crefname{theorem}{Theorem}{Theorems}
\newtheorem{corollary}[theorem]{Corollary}
\crefname{corollary}{Corollary}{Corollaries}

\crefname{lemma}{Lemma}{Lemmas}
\newtheorem{definition-lemma}[theorem]{Definition-Lemma}

\crefname{claim}{Claim}{Claims}

\newtheorem{proposition}[theorem]{Proposition}
\crefname{proposition}{Proposition}{Propositions}

\crefname{conjecture}{Conjecture}{Conjectures}

\crefname{assumption}{Assumption}{Assumptions}

\theoremstyle{definition}
\newtheorem{definition}[theorem]{Definition}
\crefname{definition}{Definition}{Definitions}
\newtheorem{remark}[theorem]{Remark}
\crefname{remark}{Remark}{Remarks}

\crefname{example}{Example}{Examples}

\crefname{notation}{Notation}{Notations}

\crefname{section}{Section}{Sections}
\crefname{subsection}{Subsection}{Subsections}
\crefname{equation}{Equation}{Equations}
\crefname{figure}{Figure}{Figures}
\crefname{table}{Table}{Tables}
\crefname{algorithm}{Algorithm}{Algorithms}

\usepackage{autonum}
\usepackage{comment}

\newcommand{\kmax}{\mathop{k\operatorname{-max}}\limits}

\newcommand{\bR}{\ensuremath{\mathbb{R}}}
\newcommand{\cD}{\ensuremath{\mathcal{D}}}
\newcommand{\cM}{\ensuremath{\mathcal{M}}}
\newcommand{\reg}{\mathrm{reg}}
\newcommand{\ess}{\mathrm{ess}}

\title{Optimizing persistent homology based functions}
\author{Mathieu Carri\`ere\footremember{inriaSAM}{DataShape, Inria Sophia-Antipolis, Biot, France} \and Fr\'ed\'eric Chazal\footremember{inriaSAC}{DataShape, Inria Saclay, Palaiseau, France} \and Marc Glisse\footrecall{inriaSAC} \and Yuichi Ike\footnote{Fujitsu Laboratories, AI Lab, Tokyo, Japan} \and Hariprasad Kannan}
\date{\today}

\begin{document}

\maketitle

\begin{abstract}
Solving optimization tasks based on functions and losses with a topological flavor is a very active,
growing field of research in data science and Topological Data Analysis, 
with applications in non-convex optimization, statistics and machine learning. 
However, the approaches proposed in the literature are usually
anchored to a specific application and/or topological construction, and do not come with theoretical guarantees. 
To address this issue, we study the differentiability of a general map associated with the most common topological construction, that is, the persistence map. 
Building on real analytic geometry arguments, we propose a general framework that allows us to define and compute gradients for persistence-based functions in a very simple way. 
We also provide a simple, explicit and sufficient condition for convergence of stochastic subgradient methods for such functions.
This result encompasses all the constructions and applications of topological optimization in the literature.
Finally, we provide associated code, that is easy to handle and to mix with other non-topological methods and constraints,
as well as some experiments showcasing the versatility of our approach.
\end{abstract}

\section{Introduction}\label{sec:intro}

Persistent homology is a central tool in Topological Data Analysis that allows to efficiently infer relevant topological features of complex data in a descriptor called persistence diagram. 
It found many applications in Machine Learning (ML) where it initially played the role of a feature engineering tool through the direct use of persistence diagrams, or through dedicated ML architectures to handle them---see, e.g., \cite{hofer2017deep,umeda2017time, carriere2020perslay,dindin2020topological, kim2020efficient}.
For the last few years, a growing number of works have been successfully using persistence theory in different ways in order to, for instance, better understand, design and improve neural network architectures---see, e.g., \cite{rieck2018neural,moor2019topological,carlsson2020topological,gabrielsson2019exposition}---or design regularization and loss functions incorporating topological terms and penalties for various ML tasks---see, e.g., \cite{chen2019topological, hofer2019connectivity, clough2020topological}.
These new use cases of persistence generally involve minimizing functions that depend on persistence diagrams. Such functions are in general highly non-convex and not differentiable, and thus their theoretical and practical minimization can be difficult. 
In some specific cases, persistence-based functions can be designed to be differentiable and/or some effort have to be  made to compute their gradient, so that standard gradient descent techniques can be used to minimize them---see e.g., \cite{wangtopogan,poulenard2018topological, bruel2020topology}. 
In the general case, recent attempts have been made to better understand their differential structures~\cite{leygonie2019framework}.
Moreover, building on powerful tools provided by software libraries such as PyTorch or TensorFlow, practical methods allowing to encode and optimize a large family of persistence-based functions have been proposed and experimented~\cite{gabrielsson2020topology, solomon2020fast}. 
However, in all these cases, the algorithms used to minimize these functions do not come with theoretical guarantees of convergence to a global or local minimum. 

\paragraph{Contributions and organization of the article.}
The aim of this article is to provide a general framework that includes almost all persistence-based functions from the literature, and for which stochastic subgradient descent algorithms are easy to implement and come with convergence guarantees. 

More precisely, we first observe that the persistence map, converting a filtration over a given simplicial complex
\footnote{The presentation is restricted to simplicial complexes for simplicity, but this generalizes to other complexes as well. We present an example with cubical complexes in Appendix.}
into a persistence diagram, can be thought of as a map between Euclidean spaces (\cref{sec:filt}). 
This observation allows us to prove that the persistence map is semi-algebraic and, using classical arguments from o-minimal geometry, to study the differentiability of the persistence of parametrized families of filtrations (\cref{sec:ominim}). 
Then, building on the recent work \cite{subgradient}, we consider the minimization problem of persistence-based functions and show that under mild assumptions, stochastic subgradient descent algorithms applied to such functions converge almost surely to a critical point (\cref{sec:funcs}). 
We also provide a simple corresponding Python implementation\footnote{\url{https://github.com/MathieuCarriere/difftda.git}} for minimizing functions of persistence, and we illustrate it with several examples from the literature (\cref{sec:implementation}).

\section{Filtrations and persistence diagrams} \label{sec:filt}

In this section, we show that the persistence map is nothing but a permutation of the coordinates of a vector containing the filtration values.

\subsection{Simplicial complexes and filtrations}

Recall that given a set $V$, a (finite) \emph{simplicial complex} $K$ is a collection of finite subsets of $V$ that satisfies
(1) $\{v \} \in K$ for any $v \in V$, and
(2) if $\sigma \in K$ and $\tau \subseteq \sigma$ then $\tau \in K$.
An element $\sigma \in K$ with $|\sigma|=k+1$ is called a $k$-simplex. 

Given a simplicial complex $K$ and a subset $R$ of $\bR$, a \emph{filtration} of $K$ is an increasing sequence $(K_r)_{r \in R}$ of subcomplexes of $K$ with respect to the inclusion, i.e., $K_r \subseteq K_s$ for any $r \leq s$, and such that $\bigcup_{r \in R} K_r = K$.

To each simplex $\sigma \in K$, one can associate its \emph{filtering index} $\Phi_\sigma = \inf \{ r \in R: \sigma \in K_r \}$. 
Thus, when $K$ is finite, a filtration of $K$ can be conveniently represented as a filtering function $\Phi \colon K \to \bR$.
Equivalently, it can be represented as a $|K|$-dimensional vector $\Phi  = (\Phi_\sigma)_{\sigma \in K}$ in $\mathbb{R}^{|K|}$ whose coordinates are the indices of the simplices of $K$ and that satisfies the following condition: 
if $\sigma, \tau \in K$ and $\tau \subseteq \sigma$, then $\Phi_\tau \le \Phi_\sigma$.
As a consequence, if the vectorized filtration $\Phi$ depends on a parameter, the corresponding family of filtrations can be represented as a map from the space of parameters to $\bR^{|K|}$ in the following way.

\begin{definition} \label{def:parametrized_filtrations}
    Let $K$ be a simplicial complex and $A$ a set. 
    A map $\Phi \colon A \to \bR^{|K|}$ is said to be a \emph{parametrized family of filtrations} if for any $x \in A$ and $\sigma, \tau \in K$ with $\tau \subseteq \sigma$, one has $\Phi_\tau(x) \le \Phi_\sigma(x)$. 
\end{definition}

\subsection{Persistence computation from filtrations}
\label{compute-pers}

We briefly recall how the computation of the persistence diagram of a filtered simplicial complex decomposes into: (i)~a purely combinatorial part only relying on the order on the simplices induced by the filtration, and (ii)~a part relying on the filtration values. 
A detailed introduction to persistent homology and its computation can be found in, e.g., \cite{edelsbrunner2010computational,boissonnat2018geometric}.

Let $K$ be a simplicial complex endowed with a filtration and corresponding filtering function $\Phi \in \bR^{|K|}$, where $|K|$ is the number of non-empty simplices of $K$.

\paragraph{First part: combinatorial part (persistence pairs).}
The filtering function $\Phi$ induces a total preorder on the elements of $K$ as follows: 
$\tau \preceq \sigma$ if $\Phi_\tau \leq \Phi_\sigma$. 
This preorder can be refined into a total order by breaking ties in some fairly arbitrary way, as long as it is consistent with the face relation, i.e., if $\tau\subseteq\sigma$, then $\tau \preceq \sigma$. 
One way to break ties is to sort simplices that have the same filtration value by dimension, and then order the ones that are still equivalent according to some arbitrary indexing of the simplices. 
A different way is to index the vertices, represent simplices by their decreasing list of vertices, and sort equivalent simplices using the lexicographic order on those lists. 
In the following, we will assume that the total order is a function of the preorder, in particular it is deterministic and does not depend on the exact values of $\Phi$. 
Note that while different orders may yield different pairings, they all translate to the same persistence diagram in the second part.
The basic algorithm to compute persistence iterates over the ordered set of simplices $\sigma_1 \preceq \dots \preceq \sigma_{|K|}$ according to \cref{alg:persistent-pairs} below---see Section~11.5.2 in \cite{boissonnat2018geometric} for a detailed description of the algorithm.

\begin{algorithm}[ht]
  \caption{Persistence pairs computation (sketch)}
  \label{alg:persistent-pairs}
  \begin{algorithmic}
  	\STATE{\bf Input:} Ordered sequence of simplices $\sigma_1 \preceq \dots \preceq \sigma_{|K|}$
  	\STATE $K_0 = \emptyset$
    \STATE $\mathrm{Pairs}_0 = \mathrm{Pairs}_1 = \dots = \mathrm{Pairs}_{d-1} = \emptyset$
    \FOR{$j=1$ to $|K|$}
    	\STATE $k = \dim \sigma_j$
    	\STATE $K_j = K_{j-1} \cup \sigma_j$
    	\IF{$\sigma_j$ does not create a new $k$-dimensional homology class in $K_j$}
    	    \STATE a $(k-1)$-dimensional homology class created in $K_{l(j)}$ by $\sigma_{l(j)}$ for some $l(j) < j$ becomes homologous to $0$ in $K_j$.
    		\STATE $\mathrm{Pairs}_{k-1} \gets \mathrm{Pairs}_{k-1} \cup \{ (\sigma_{l(j)}, \sigma_j) \}$;
    	\ENDIF
    \ENDFOR
    \STATE{\bf Output:} Persistence pairs in each dimension $\mathrm{Pairs}_0, \mathrm{Pairs}_1, \dots, \mathrm{Pairs}_{d-1}$
  \end{algorithmic}
\end{algorithm}

Note that for each dimension $k$, some $k$-dimensional simplices may remain unpaired at the end of the algorithm; their number is equal to the $k$-dimensional Betti number of $K$. 

\paragraph{Second part: associated filtration values.}

The persistence diagram of the filter function $\Phi$ is now obtained by associating to each persistent pair $(\sigma_{l(j)}, \sigma_j)$ the point 
$(\Phi_{\sigma_{l(j)}}, \Phi_{\sigma_j})$.
Moreover, to each unpaired simplex $\sigma_l$ is associated the point $(\Phi_{\sigma_l},+\infty)$.  

If $p$ is the number of persistence pairs and $q$ is the number of unpaired simplices, then $|K|=2p+q$ and the persistence diagram $D(\Phi)$ of the filtration $\Phi$ of $K$ is made of $p$ points in $\bR^2$ (counted with multiplicity) and $q$ points (also counted with multiplicity) with infinite second coordinate.
Choosing the lexicographical order on $\bR \times (\bR \cup \{+\infty \})$, the persistence diagram $D(\Phi)$ can be represented as a vector in $\bR^{|K|}$ and the output of the persistence algorithm can be simply seen as a permutation of the coordinates of the input vector $\Phi$. 
Moreover, this permutation only depends on the order on the simplices of $K$ induced by $\Phi$. 

\begin{definition}\label{def:regess}
     The subset of points of a persistence diagram $D$ with finite coordinates (resp.\ infinite second coordinate) is called the \emph{regular part} (resp.\ \emph{essential part}) of $D$ and denoted by $D_{\reg}$ (resp.\ $D_{\ess}$).
\end{definition}

With the notations defined above, $D_{\reg}$ and $D_{\ess}$ can be represented as vectors in $\bR^{2p}$ and $\bR^q$, respectively. 

 Note that, in practice, the above construction is usually done dimension by dimension to get a persistence diagram for each dimension in homology, by restricting to the subset of simplices of dimension $k$ and $k+1$. Without loss of generality, and to avoid unnecessary heavy notation, in the following we consider the whole persistence diagram, made of the union of the persistence diagrams in all dimensions~$k$. 

\section{Differentiability of functions of persistence}\label{sec:ominim}

o-minimal geometry provides a well-suited setting to describe the parametrized families of filtrations encountered in practice and to exhibit interesting differentiability properties of their composition with the persistence map.

\subsection{Background on o-minimal geometry}

In this section, we recall some elements of o-minimal geometry, which are needed in the next sections---see, e.g., \cite{coste2000introduction} for a more detailed introduction. 

\begin{definition}[o-minimal structure]
    An \emph{o-minimal structure} on the field of real numbers $\mathbb{R}$ is a collection $(S_n)_{n \in \mathbb{N}}$, where each $S_n$ is a set of subsets of $\mathbb{R}^n$ such that:
    \begin{enumerate}
        \itemsep -0.1cm
        \item $S_1$ is exactly the collection of finite unions of points and intervals; 
        \item all algebraic subsets\footnote{Recall that an algebraic set is the $0$-level set of a polynomial.} of $\mathbb{R}^n$ are in $S_n$;
        \item $S_n$ is a Boolean subalgebra of $\bR^n$ for any $n \in \mathbb{N}$; 
        \item if $A \in S_n$ and $B \in S_m$, then $A \times B \in S_{n+m}$;
        \item if $\pi \colon \bR^{n+1} \to \bR^n$ is the linear projection onto the first $n$ coordinates and $A \in S_{n+1}$, then $\pi(A) \in S_n$.
    \end{enumerate}
\end{definition}

An element $A \in S_n$ for some $n \in \mathbb{N}$ is called a \emph{definable set} in the o-minimal structure. 
For a definable set $A \subseteq \bR^n$, a map $f \colon A \to \bR^m$ is said to be \emph{definable} if its graph is a definable set in $\bR^{n+m}$.

Definable sets are stable under various geometric operations. 
The complement, closure and interior of a definable set are definable sets. 
The finite unions and intersections of definable sets are definable. 
The image of a definable set by a definable map is itself definable. 
Sums and products of definable functions as well as compositions of definable functions are definable---see Section~1.3 in \cite{coste2000introduction}. 
In particular, the $\max$ and $\min$ of finite sets of real-valued definable functions are also definable.  
An important property of definable sets and definable maps is that they admit a finite Whitney stratification. 
This implies that (i)~any definable set $A \subseteq \bR^n$ can be decomposed into a finite disjoint union of smooth submanifolds of $\bR^n$ and (ii)~for any definable map $\Phi \colon A \to \bR^m$, $A$ can also be decomposed into a finite union of smooth manifolds such that the restriction of $\Phi$ on each of these manifolds is a differentiable function. 

The simplest example of o-minimal structures is given by the family of semi-algebraic subsets\footnote{It is the family of all finite unions and intersections of level sets and sublevel sets of polynomials \cite{benedetti1991real}.} of $\bR^n \ (n \in \mathbb{N})$.
Although most of the classical parametrized families of filtrations are semi-algebraic, the o-minimal framework actually allows to consider larger families.
In particular, the result of \cite{wilkie1996model} says that the family of images of the sublevel sets of functions in $\bR[x_1,\dots,x_N,\exp(x_1),\dots,\exp(x_N)]$ for some $N \in \mathbb N$ under linear projections is an o-minimal structure, which allows us to mix exponential functions with semi-algebraic functions.

\subsection{Persistence diagrams of definable parametrized families of filtrations}\label{sec:pers}

Let $K$ be a simplicial complex and $\Phi \colon A \to \bR^{|K|}$ be a parametrized family of filtrations that is definable in a given o-minimal structure.  
If for any $x, x' \in A$, the preorders induced by $\Phi(x)$ and $\Phi(x')$ on the simplices of $K$ are the same, i.e., for any $\sigma_1, \sigma_2 \in K$, $ \Phi_{\sigma_1}(x) \leq \Phi_{\sigma_2}(x)$ if and only if $\Phi_{\sigma_1}(x') \leq \Phi_{\sigma_2}(x')$, then the pairs of simplices $(\sigma_{i_1},\sigma_{j_1}), \dots , (\sigma_{i_p},\sigma_{j_p})$, and the unpaired simplices $\sigma_{i_{p+1}},\dots, \sigma_{i_{p+q}}$ that are computed by the persistence \cref{alg:persistent-pairs} are independent of $x$. 
As a consequence, the persistence diagram $D = D(\Phi(x))$ of $\Phi(x)$ is 
\begin{align}  \label{eq:diff_dgm}
    D = \bigcup_{k=1}^p (\Phi_{\sigma_{i_k}}(x),\Phi_{\sigma_{j_k}}(x)) \cup \bigcup_{k=1}^q (\Phi_{\sigma_{i_{p+k}}}(x),+\infty),
\end{align}
where $|K|=2p+q$.

Given the lexicographic order on $\bR \times (\bR \cup \{+\infty \})$, the points of any finite multi-set $D \subseteq \bR \times (\bR \cup \{+\infty \})$ with $p$ points in $\bR^2$ and $q$ points in $\bR \times \{ +\infty \}$ can be ordered in non-decreasing order, and $D$ can be represented as a vector in $\bR^{2p+q}$. 
As a consequence, denoting by $\mathrm{Filt}_K$ the set of vectors in $\bR^{|K|}$ defining a filtration on $K$, the persistence map $\mathrm{Pers} \colon \mathrm{Filt}_K \to \bR^{|K|}$ that assigns to each filtration of $K$ its persistence diagram consists in a
permutation of the coordinates of $\bR^{|K|}$. 
This permutation is constant on the set of filtrations that define the same preorder on the simplices of $K$. 
This leads to the following statement.

\begin{proposition} \label{lemma:semi-alg-persistence}
Given a simplicial complex $K$, the map $\mathrm{Pers} \colon \mathrm{Filt}_K \subseteq \bR^{|K|} \to  \bR^{|K|}$ is semi-algebraic, and thus definable in any o-minimal structure. 
Moreover, there exists a semi-algebraic partition of $\mathrm{Filt}_K$ such that the restriction of $\mathrm{Pers}$ to each element of this partition is a Lipschitz map. 
\end{proposition}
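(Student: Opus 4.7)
The plan is to exploit the fact, already established in Section~\ref{compute-pers} and the paragraphs preceding the proposition, that the persistence algorithm output depends on the filtration vector $\Phi$ only through the preorder it induces on the simplices of $K$, and that on a region of $\mathrm{Filt}_K$ where this preorder is constant, $\mathrm{Pers}$ acts as a fixed permutation of the coordinates of $\bR^{|K|}$. The whole proposition should follow by making this partition explicit and checking that each piece is semi-algebraic.

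The first step is to observe that $\mathrm{Filt}_K \subseteq \bR^{|K|}$ is itself semi-algebraic: it is cut out by the finitely many linear inequalities $\Phi_\tau \le \Phi_\sigma$ coming from face relations $\tau \subseteq \sigma$. Next, I would enumerate the finitely many total preorders $\preceq$ on the simplex set that are compatible with the face relation (i.e.\ those refined by a face-consistent total order as in Section~\ref{compute-pers}). For each such preorder $\preceq$, define
\begin{equation*}
    C_\preceq = \bigl\{ \Phi \in \mathrm{Filt}_K : \text{for all } \sigma,\tau \in K,\ \Phi_\sigma \le \Phi_\tau \iff \sigma \preceq \tau \bigr\}.
\end{equation*}
This is a semi-algebraic (in fact polyhedral) subset of $\mathrm{Filt}_K$, being defined by a Boolean combination of weak linear inequalities $\Phi_\sigma \le \Phi_\tau$ and $\Phi_\sigma = \Phi_\tau$. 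The $C_\preceq$'s form a finite partition of $\mathrm{Filt}_K$.

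The second step is the key structural point: by the tie-breaking convention described in Section~\ref{compute-pers}, the face-consistent total order associated with a given preorder is a deterministic function of that preorder, independent of the actual values of $\Phi$. Consequently, the persistence pairs $(\sigma_{i_k},\sigma_{j_k})$ and the list of unpaired simplices produced by Algorithm~\ref{alg:persistent-pairs} are constant on each $C_\preceq$. Combined with the description in equation~\eqref{eq:diff_dgm} and the identification of the persistence diagram with an ordered vector in $\bR^{|K|}$, this means that on each $C_\preceq$ the map $\mathrm{Pers}$ is simply a fixed coordinate permutation $\pi_\preceq \colon \bR^{|K|} \to \bR^{|K|}$. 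Such a permutation is linear with operator norm $1$, hence $1$-Lipschitz with respect to any $\ell^p$-norm, giving the second assertion of the proposition.

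Finally, to get the global semi-algebraicity of $\mathrm{Pers}$, I would write its graph as
\begin{equation*}
    \mathrm{Graph}(\mathrm{Pers}) = \bigcup_\preceq \bigl\{ (\Phi, \pi_\preceq(\Phi)) : \Phi \in C_\preceq \bigr\},
\end{equation*}
a finite union of semi-algebraic sets (graphs of linear maps restricted to semi-algebraic domains), hence semi-algebraic. Since every semi-algebraic set is definable in any o-minimal structure on $\bR$, this concludes the argument. I do not expect any serious obstacle: the nontrivial content has been set up in Section~\ref{compute-pers} (in particular, the fact that tie-breaking is a function of the preorder, not of the numerical values), and all that remains is the bookkeeping above; the only thing to be mildly careful about is handling the essential part $D_{\ess}$ with its $+\infty$ second coordinates, but since those coordinates are simply dropped when representing the diagram as a vector in $\bR^{|K|}$ as in Definition~\ref{def:regess}, the permutation description goes through unchanged.
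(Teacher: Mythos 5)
Your proof is correct and follows the same overall strategy as the paper's: partition $\mathrm{Filt}_K$ into the finitely many semi-algebraic cells on which the induced preorder (hence the tie-broken total order, hence the pairing and the coordinate permutation) is constant, and conclude semi-algebraicity of $\mathrm{Pers}$ from the finite union of graphs of linear maps over semi-algebraic domains. The one place where you genuinely diverge is the Lipschitz claim: the paper derives it from the stability theorem for persistence (an $\epsilon$-interleaving argument giving $1$-Lipschitzness in the sup norm), whereas you simply observe that on each cell $\mathrm{Pers}$ coincides with a fixed coordinate permutation, which is linear of operator norm $1$. Your argument is more elementary and entirely self-contained, and it suffices for the statement as written, since only the restriction to each cell is claimed to be Lipschitz; the paper's appeal to stability buys a stronger, global fact (the bottleneck distance between the output diagrams is controlled by $\|F_2-F_1\|_\infty$ across different cells of the partition), which is not needed here but is the standard way this robustness is usually packaged. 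Both routes rest on the same preparatory point from \cref{compute-pers}, which you correctly identify as the crux: the tie-breaking, the pairing, and the lexicographic ordering of the diagram points are all functions of the preorder alone, not of the numerical values of $\Phi$.
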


\begin{proof}
See Appendix.
\end{proof}

Since there exists a finite semi-algebraic partition of $\mathrm{Filt}_K$ on which $\mathrm{Pers}$ is a locally constant permutation, the subdifferential (see \cref{sec:funcs} for the definition) of $\mathrm{Pers}$ is well-defined and obvious to compute: each coordinate in the output (i.e., the persistence diagram) is a copy of a coordinate in the input (i.e., the filtration values of the simplices). This implies that every partial derivative is either 1 or 0.
The output can be seen as a reindexing of the input, and this is indeed how we implement it in our code, so that automatic differentiation frameworks (PyTorch, TensorFlow, etc.) can process the function $\mathrm{Pers}$ directly and do not need explicit gradient formulas---see \cref{sec:implementation}.
Note that the subdifferential depends on the arbitrary refinement of the preorder in \cref{compute-pers}.

\begin{corollary}
    Let $K$ be a simplicial complex and $\Phi \colon A \to \bR^{|K|}$ be a definable (in a given o-minimal structure) parametrized family of filtrations.
    The map $\mathrm{Pers} \circ \Phi \colon A \to \bR^{|K|}$ is definable.  
\end{corollary}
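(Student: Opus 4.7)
The plan is to deduce this corollary directly from \cref{lemma:semi-alg-persistence} together with the closure properties of definable maps that were recalled in the background subsection. The key observation is that \cref{def:parametrized_filtrations} forces the image $\Phi(A)$ to lie in $\mathrm{Filt}_K$, so the composition $\mathrm{Pers} \circ \Phi$ is well-defined as a map $A \to \bR^{|K|}$.

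First I would note that by hypothesis $\Phi \colon A \to \bR^{|K|}$ is definable in the given o-minimal structure. Next, by \cref{lemma:semi-alg-persistence}, the persistence map $\mathrm{Pers} \colon \mathrm{Filt}_K \to \bR^{|K|}$ is semi-algebraic. Since every o-minimal structure contains the semi-algebraic sets (the algebraic sets are in $S_n$ by the second axiom, and $S_n$ is closed under Boolean operations and projections, which generates the semi-algebraic family), $\mathrm{Pers}$ is definable in the given o-minimal structure as well. Its domain $\mathrm{Filt}_K$ is cut out by the finitely many linear inequalities $\Phi_\tau \le \Phi_\sigma$ for $\tau \subseteq \sigma$, hence is semi-algebraic, and in particular definable.

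Finally I would invoke the stability of definability under composition, which was explicitly recalled just after the definition of o-minimal structure (``compositions of definable functions are definable''). Applying this to the definable maps $\Phi$ and $\mathrm{Pers}$ yields that $\mathrm{Pers} \circ \Phi \colon A \to \bR^{|K|}$ is definable, as claimed.

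There is essentially no obstacle here: the corollary is a one-line consequence of \cref{lemma:semi-alg-persistence} and the stability of the class of definable maps under composition. The only point that needs a brief comment is the transition from ``semi-algebraic'' to ``definable in the given o-minimal structure'', which is immediate from the axioms of an o-minimal structure. No dependence on the particular refinement of the preorder used in \cref{compute-pers} enters the argument, since for each fixed refinement the resulting $\mathrm{Pers}$ is already semi-algebraic.
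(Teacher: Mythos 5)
Your proposal is correct and matches the paper's reasoning exactly: the corollary is stated without a separate proof precisely because it follows immediately from \cref{lemma:semi-alg-persistence} (which already records that $\mathrm{Pers}$ is semi-algebraic, hence definable in any o-minimal structure) combined with the closure of definable maps under composition recalled in the background subsection. Your additional remarks on why $\mathrm{Filt}_K$ is semi-algebraic and why semi-algebraic implies definable are accurate and consistent with the paper.
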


Note that according to the remark following \cref{lemma:semi-alg-persistence}, if $\Phi$ is differentiable, the subdifferential of $\mathrm{Pers} \circ \Phi$ can be easily computed in terms of the partial derivatives of $\Phi$ using, for example, \cref{eq:diff_dgm}. 

It also follows from standard finiteness and stratifiability properties of definable sets and maps that $\mathrm{Pers} \circ \Phi$ is differentiable almost everywhere. 
More precisely: 

\begin{proposition} \label{thm:diff_semialg}
    Let $K$ be a simplicial complex and $\Phi \colon A \to \bR^{|K|}$ a definable parametrized family of filtrations, where $\dim A = m$.
    Then there exists a finite definable partition of $A$, $A = S \sqcup O_1 \sqcup \dots \sqcup O_k$ 
    such that $\dim S < \dim A := m$ and, for any $i=1,\dots,k$, $O_i$ is a definable manifold of dimension $m$ and $\mathrm{Pers} \circ \Phi \colon O_i \to \cD$ is differentiable.
\end{proposition}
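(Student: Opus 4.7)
The plan is to deduce the statement directly from the definability of $F := \mathrm{Pers} \circ \Phi$, established by the preceding corollary, combined with a standard $C^1$-cell decomposition (or Whitney stratification) result in o-minimal geometry. No new geometric ingredient is really required beyond what has already been collected in the section.

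First, I would invoke the preceding corollary to conclude that $F \colon A \to \bR^{|K|}$ is definable in the ambient o-minimal structure. The core tool is then the classical $C^1$-cell decomposition theorem for o-minimal structures (see, e.g., \cite{coste2000introduction}): for any definable map $F \colon A \to \bR^N$ on a definable set $A$, there exists a finite definable partition $A = C_1 \sqcup \dots \sqcup C_\ell$ such that each $C_i$ is a definable $C^1$-submanifold of the ambient space and the restriction $F|_{C_i}$ is of class $C^1$. Applying this to our $F$ yields a finite decomposition of $A$ into definable smooth pieces on each of which $\mathrm{Pers} \circ \Phi$ is differentiable.

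Next, I would reorganize the cells by dimension. Let $O_1, \dots, O_k$ be the cells whose dimension equals $m = \dim A$, and set $S := \bigsqcup_{i \colon \dim C_i < m} C_i$. By construction each $O_i$ is a definable manifold of dimension $m$ on which $\mathrm{Pers} \circ \Phi$ is differentiable. Since o-minimal dimension is well-behaved under finite unions---a finite union of definable sets, each of dimension strictly less than $m$, has dimension strictly less than $m$---we get $\dim S < m$, which is exactly what is required.

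The one subtlety, really minor, is to ensure that the collection $\{O_i\}$ is non-empty, i.e.\ that at least one cell has top dimension $m$. This follows from the same basic property of o-minimal dimension: if every $C_i$ satisfied $\dim C_i < m$, then $A = \bigsqcup_i C_i$ would itself have dimension $< m$, contradicting $\dim A = m$. The only non-trivial content is therefore the $C^1$-cell decomposition theorem itself, which is a standard fact about o-minimal structures; everything else is bookkeeping.
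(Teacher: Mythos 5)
Your proposal is correct and follows essentially the same route as the paper, which states the proposition as an immediate consequence of the finite stratification/cell-decomposition properties of definable sets and maps recalled in its background section (definability of $\mathrm{Pers}\circ\Phi$ from the corollary, then a $C^1$-cell decomposition, then grouping cells by dimension). Your explicit bookkeeping of the dimensions of $S$ and the $O_i$ is exactly the argument the paper leaves implicit.
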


\subsection{Examples of definable families of filtrations}\label{sec:exs}



\paragraph{Vietoris-Rips filtrations.}
    The family of Vietoris-Rips filtrations built on top  of sets of $n$ points $x_1, \dots, x_n \in \bR^d$ is the semi-algebraic parametrized family of filtrations 
    \begin{align}
        \Phi \colon A = (\bR^d)^n  \to  \bR^{|\Delta_n|} = \bR^{2^n-1},
    \end{align}
    where $\Delta_n$ is the simplicial complex made of all the faces of the $(n-1)$-dimensional simplex, defined, for any $x=(x_1, \dots, x_n) \in A$ and any simplex $\sigma \subseteq \{1, \dots, n\}$, by 
    \begin{align}
        \Phi_\sigma(x) = \max_{i,j \in \sigma} \| x_i - x_j \|.
    \end{align}
    One easily checks that the permutation induced by $\mathrm{Pers}$ is constant on the connected components of the complement of the union of the subspaces $S_{i,j,k,l} = \{(x_1, \dots, x_n) : \| x_i - x_j \| = \| x_k - x_l \| \}$ over all the $4$-tuples $(i,j,k,l)$ such that at least $3$ of the $4$ indices $i,j,k,l$ are distinct. 
    This example naturally extends to Vietoris-Rips-like filtrations in the following way. 
    Let $A \subset \cM_n(\bR)$ be the set of $n \times n$ symmetric matrices with non-negative entries and $0$ on the diagonal.
    This is a semi-algebraic subset of the space of $n$-by-$n$ matrices $\cM_n(\bR) \simeq \bR^{n^2}$, of dimension $m = (n-1)(n-2)/2$. 
    The map $\Phi \colon A \to \bR^{|\Delta_n|} = \bR^{2^n}$ defined by
    $\Phi_\sigma(M) = \max_{i,j \in \sigma} m_{i,j}$ for any $M = (m_{i,j})_{1 \leq i,j \leq n} \in A$, is a semi-algebraic family of filtrations. 
    Note that the set $S$ of \cref{thm:diff_semialg} can be chosen to be the set of matrices with at least $2$ entries that are equal.  

\paragraph{Weighted Rips filtrations.}
Given a function $f \colon \bR^d \to \bR$, the family of weighted Rips filtrations $\Phi \colon A = (\bR^d)^n  \to  \bR^{|\Delta_n|} = \bR^{2^n}$ associated to $f$ is defined, for any $x=(x_1, \dots, x_n) \in A$ and any simplex $\sigma \subseteq \{1, \dots, n \}$, by
\begin{itemize}
\item $\Phi_\sigma(x) = 2 f(x_j)$ if $\sigma = [j]$;
\item $\Phi_\sigma(x) = \max(2 f(x_i), 2 f(x_j), \| x_i - x_j \| + f(x_i) + f(x_j))$, if $\sigma = [i,j]$, $i \not = j$;
\item $\Phi_\sigma(x) = \max( \Phi_{[i,j]}(x), i,j \in \sigma)$ if $|\sigma| \geq 3$.
\end{itemize}

Since Euclidean distances and $\max$ function are semi-algebraic, this family of filtrations is definable as soon as the weight function $f$ is definable.  

This example easily extends to the case where the weight function depends on the set of points $x = ( x_1, \dots, x_n )$: the weight at vertex $y$ is defined by $f(x,y)$ with $f \colon (\bR^d)^n \times \bR^d \to \bR$. 
A particular example of such a family is given by the so-called DTM filtration \cite{anai2020dtm}, where $f(x,y)$ is the average distance from $y$ to its $k$-nearest neighbors in $x$. 
In this case, $f$ is semi-algebraic, and the family of DTM filtrations is semi-algebraic.   

The o-minimal framework also allows us to consider weight functions involving exponential functions \cite{wilkie1996model}, such as, for instance, kernel-based density estimates with Gaussian kernels.

\paragraph{Sublevel sets filtrations.}
Let $K$ be a simplicial complex with $n$ vertices $v_1, \dots, v_n$. Any real-valued function $f$ defined on the vertices of $K$ can be represented as a vector $(f(v_1), \dots, f(v_n)) \in \bR^n$. The family of sublevel sets filtrations $\Phi \colon A=\bR^n \to \bR^{|K|}$ of functions on the vertices of $K$ is defined by
$\Phi_\sigma(f) = \max_{i \in \sigma} f_i$
for any $f = (f_1, \dots, f_n) \in A$ and any simplex $\sigma \subseteq \{1, \dots, n\}$.
This filtration is also known as the \emph{lower-star filtration} of $f$.
The function $\Phi$ is obviously semi-algebraic, and for \cref{thm:diff_semialg} it is sufficient to choose $S = \bigcup_{1 \leq i < j \leq n} \{ f =  (f_1, \dots, f_n) \in A: f_i = f_j \}$.

\section{Minimization of functions of persistence}\label{sec:funcs}

Using the same notation as in the previous section, recall that the space of persistence diagrams associated to a filtration of $K$ is identified with $\bR^{|K|} = (\bR^2)^p \times \bR^q$, where each point in the $p$ copies of $\bR^2$ is a point with finite coordinates in the persistence diagram and each coordinate in $\bR^q$ is the $x$-coordinate of a point with infinite persistence.

\begin{definition} \label{def:fn-pers}
A function $E \colon \bR^{|K|} = (\bR^2)^p \times \bR^q \to \bR$ is said to be a \emph{function of persistence} if it is invariant to permutations of the points of the persistence diagram: for any $(p_1, \dots, p_p, e_1, \dots, e_q) \in (\bR^2)^p \times \bR^q$ and any permutations $\alpha, \beta$ of the sets $\{ 1, \dots, p \}$ and $\{ 1, \dots, q \}$, respectively, one has
$$E(p_{\alpha(1)},\dots, p_{\alpha(p)}, e_{\beta(1)}, \dots, e_{\beta(q)}) =  E(p_1,\dots, p_p, e_1, \dots, e_q).$$
\end{definition}

It follows from this permutation invariance and \cref{lemma:semi-alg-persistence}
that if a function of persistence $E \colon \bR^{2p+q} = \bR^{|K|} \to \bR$ is locally Lipschitz, then the composition $E \circ \mathrm{Pers}$ is also locally Lipschitz.     
Moreover, if $E$ is definable in an o-minimal structure, then for any definable parametrized family of filtrations $\Phi \colon A \subseteq \bR^d \to \bR^{|K|}$, the composition $\mathcal{L} = E \circ \mathrm{Pers} \circ \Phi \colon A  \to \bR$ is also definable. 
As a consequence, $\mathcal{L}$ has a well-defined Clarke subdifferential $\partial \mathcal{L}(z):=\mathrm{Conv} \{ \lim_{z_i \to z} \nabla \mathcal{L}(z_i) \; : \; \text{$\mathcal{L}$ is differentiable at $z_i$} \}$, since it is differentiable almost everywhere thanks to \cref{thm:diff_semialg}.



\subsection{Stochastic gradient descent}

To minimize $\mathcal{L}$, we consider the differential inclusion
\begin{align}
    \frac{dz}{dt} \in - \partial\mathcal{L}(z(t)) \quad \text{\rm for almost every $t$},
\end{align}
whose solutions $z(t)$ are the trajectories of the subgradient of $\mathcal{L}$. 
They can be approximated by the standard stochastic subgradient algorithm given by the iterations of 
\begin{equation} \label{eq:stochastic_gradient}
    x_{k+1} = x_k - \alpha_k ( y_k + \zeta_k),\ y_k \in \partial \mathcal{L}(x_k),
\end{equation}
where the sequence $(\alpha_k)_k$ is the learning rate and $(\zeta_k)_k$ is a sequence of random variables.  
In \cite{subgradient}, the authors prove that under mild technical conditions on these two sequences, the stochastic subgradient algorithm converges almost surely to a critical point of $\mathcal{L}$ as soon as $\mathcal{L}$ is locally Lipschitz. 

More precisely, consider the following assumptions, which correspond to Assumption~C in \cite{subgradient}:
\begin{enumerate}
    \itemsep -0.1cm
    \item for any $k$, $\alpha_k \geq 0$, $\sum_{k=1}^\infty \alpha_k = +\infty$ and, $\sum_{k=1}^\infty \alpha_k^2 < +\infty$;
    \item $\sup_k \| x_k \| < +\infty$, almost surely;
    \item denoting by $\mathcal{F}_k$ the increasing sequence of $\sigma$-algebras $\mathcal{F}_k = \sigma(x_j,y_j,\zeta_j, j < k)$, there exists a function $p \colon \bR^d \to \bR$ which is bounded on bounded sets such that almost surely, for any $k$, 
    \begin{align}
        \mathbb{E}[\zeta_k|\mathcal{F}_k] = 0 \ \ \mbox{\rm and} \ \ 
    \mathbb{E}[\| \zeta_k \|^2 |\mathcal{F}_k] < p(x_k).
    \end{align}
\end{enumerate}

These assumptions are standard and not very restrictive. Assumption~1 depends on the choice of the learning rate by the user and is easily satisfied, e.g., taking $\alpha_k = 1/k$. 
Assumption~2 is usually easy to check for most of the functions $\mathcal{L}$ encountered in practice. 
Assumption~3 is a standard condition, which states that, conditioned upon the past, the variables $\zeta_k$ have zero mean and controlled moments; e.g., this can be achieved by taking a sequence of independent and centered variables with bounded variance that are also independent of the $x_k$'s and $y_k$'s.

Under these assumptions, the following result is an immediate consequence of Corollary~5.9 in \cite{subgradient}. 

\begin{theorem} \label{thm:convergence_guarantees}
Let $K$ be a simplicial complex, $A \subseteq \bR^d$, and $\Phi \colon A \to \bR^{|K|}$ a parametrized family of filtrations of $K$ that is definable in an o-minimal structure. 
Let $E \colon \bR^{|K|} \to \bR$ be a definable function of persistence such that $\mathcal{L} = E \circ \mathrm{Pers} \circ \Phi$ is locally Lipschitz.  
Then, under the above assumptions~1, 2, and 3, almost surely the limit points of the sequence $(x_k)_k$ obtained from the iterations of \cref{eq:stochastic_gradient} are critical points of $\mathcal{L}$ and the sequence $(\mathcal{L}(x_k))_k$ converges. 
\end{theorem}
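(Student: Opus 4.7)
The plan is to check that $\mathcal{L}$ satisfies the hypotheses of Corollary~5.9 in \cite{subgradient} and then invoke that result directly. That corollary requires three ingredients: (i) the step-size, boundedness and noise conditions of Assumption~C in \cite{subgradient}; (ii) the objective is locally Lipschitz; and (iii) the objective is definable in an o-minimal structure, so that its Clarke subdifferential satisfies a chain rule along absolutely continuous curves (the so-called path-differentiability). Condition (i) is precisely Assumptions~1--3 of the theorem, so no work is needed there.

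For (ii), local Lipschitzness of $\mathcal{L}$ is a standing hypothesis of the statement. For (iii), I would assemble definability by composition: $\Phi$ is assumed definable in the given o-minimal structure, $E$ is assumed definable, and by \cref{lemma:semi-alg-persistence} the persistence map $\mathrm{Pers}$ is semi-algebraic, hence definable in every o-minimal structure. Since definable maps are closed under composition, $\mathcal{L} = E \circ \mathrm{Pers} \circ \Phi$ is definable. With (i), (ii), (iii) in hand, Corollary~5.9 of \cite{subgradient} yields at once that almost surely every limit point of $(x_k)_k$ in \cref{eq:stochastic_gradient} is a Clarke critical point of $\mathcal{L}$ and that $(\mathcal{L}(x_k))_k$ converges.

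The only step that really does any work is path-differentiability of locally Lipschitz definable functions, which rests on two classical tame-geometry facts: every definable set admits a finite Whitney stratification into definable $C^1$ manifolds, and on each such stratum a locally Lipschitz definable function restricts to a $C^1$ map. From these one deduces that the Clarke subdifferential is controlled by stratum gradients, and that the desired chain rule along absolutely continuous curves holds. This is the main technical point behind Corollary~5.9, but it is already developed in \cite{subgradient}; from our side, nothing beyond definability and local Lipschitzness of $\mathcal{L}$ needs to be checked, which makes the present theorem a direct corollary.
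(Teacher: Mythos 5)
Your proposal is correct and matches the paper's argument exactly: the paper likewise observes that $\mathcal{L} = E \circ \mathrm{Pers} \circ \Phi$ is definable (via \cref{lemma:semi-alg-persistence} and closure of definable maps under composition), locally Lipschitz by hypothesis, and that Assumptions 1--3 are Assumption~C of \cite{subgradient}, then invokes Corollary~5.9 of that reference directly. Nothing is missing.
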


The above theorem provides explicit conditions ensuring the convergence of stochastic subgradient descent for functions of persistence. 
The main criterion to be checked is the local Lipschitz condition for $\mathcal{L}$. 
From the remark following \cref{def:fn-pers}, it is sufficient to check that $\Phi$ and $E$ are Lipschitz. Regarding $\Phi$, it is obvious for the examples of \cref{sec:exs}. 
This is not the case for some other examples, such as the so-called alpha-complex filtration that can be made locally Lipschitz using a simple technical trick---see Appendix.

\subsection{Examples of definable locally Lipschitz functions of persistence}

\paragraph{Total persistence.}
Let $E$ be the sum of the distances of each point of a persistence diagram with finite coordinates to the diagonal: given a persistence diagram represented as a vector in $\bR^{2p+q}$, $D = ((b_1,d_1), \dots, (b_p,d_p), e_1, \dots, e_q)$, 
\begin{align}
    E(D) = \sum_{i=1}^p |d_i - b_i|.
\end{align}
Then $E$ is obviously semi-algebraic, and thus definable in any o-minimal structure.
It is also Lipschitz.  

\paragraph{Wasserstein and bottleneck distance}
Given a persistence diagram $D^*$, the bottleneck distance between the regular part of a diagram $D$ and the regular part of $D^*$ (see \cref{def:regess}) is given by 
\begin{align}
    E(D) = d_B(D_{\reg},D^*_{\reg}) = \min_{m} \ \ \max_{(p,p^*) \in m} ||p - p^*||_\infty,
\end{align}
where, denoting $\Delta = \{ (x,x) : x \in \bR \}$ the diagonal in $\bR^2$, $m$ is a partial matching between $D_{\reg}$ and $D^*_{\reg}$, i.e., a subset of $(D_{\reg} \cup \Delta)  \times (D^*_{\reg} \cup \Delta)$ such that every point of $D_{\reg} \setminus \Delta$ and $D^*_{\reg} \setminus \Delta$, appears exactly once in $m$.
One can easily check that the map $E$ is semi-algebraic, and thus definable in any o-minimal structure. It is also Lipschitz.
This property also extends to the case where the bottleneck distance is replaced by the so-called Wasserstein distance $W_p$ with $p \in \mathbb{N}$ \cite{cohen2010lipschitz}, or its approximation, the Sliced Wasserstein distance~\cite{Carriere2017b}.
Optimization of these functions $E$ and other functions of bottleneck and Wasserstein distances
have been used, for example, in shape matching in \cite{poulenard2018topological}.  See also the example on 3D shape in Appendix.

\paragraph{Persistence landscapes \cite{bubenik2015statistical}}
To any given point $p = (x,y)\in \bR^2$ with $x=\frac{b+d}{2}$ and $y=\frac{d-b}{2}$, associate the function $\Lambda_p \colon \bR \to \bR$ defined by
\begin{equation}\label{eq:triangle}
 \Lambda_p(t) =
%
 \begin{cases}
  t-b & (t \in [b, \frac{b+d}{2}]) \\
  d-t & (t \in (\frac{b+d}{2}, d]) \\
  0 & (\text{otherwise}).
 \end{cases}
\end{equation}
Given a persistence diagram $D$, the persistence landscape of $D$ is a summary of the arrangement of the graphs of the functions $\Lambda_p$, $p \in D$:
\begin{align}
\lambda_D(k,t) = \kmax_{p \in D} \Lambda_p(t), \ \ t \in [0,T], k \in \mathbb{Z}^+,
\end{align}
where $\kmax$ is the $k$th largest value in the set, or $0$ when the set contains less than $k$ points. 
Given a positive integer $k$, a finite set $\{t_1, \dots, t_n \} \subset \bR$, and a simplicial complex $K$, the map that associates the vector 
$(\lambda_D(k,t_1), \dots, \lambda_D(k,t_n))$ to each persistence diagram $D$ of a filtration of $K$ is Lipschitz \cite{bubenik2015statistical} and clearly semi-algebraic.

Other classical ways to vectorize persistence diagrams are the \emph{linear representations} \cite{chazal2018density} which are also definable in o-minimal structures, such as, e.g., persistence images~\cite{adams2017persistence}---see Appendix. In \cite{divol2019understanding}, the authors give explicit conditions for such representations to be locally Lipschitz.

\section{Numerical illustrations} \label{sec:implementation}

We showed in \cref{sec:ominim,sec:funcs} that the usual stochastic gradient descent procedure of \cref{eq:stochastic_gradient} enjoys some convergence properties for persistence-based functions. 
This means in particular that the algorithms available in standard libraries such as TensorFlow and PyTorch, which implement stochastic gradient descent (among other optimization methods), can be leveraged and used as is for differentiating persistence diagrams, while still ensuring convergence. 
The purpose of this section is to illustrate that our code, which implements the general gradient defined in \cref{thm:diff_semialg} for persistence-based functions, and which is based on Gudhi\footnote{https://gudhi.inria.fr/} and TensorFlow, can be readily used for studying several different persistence optimization tasks. 
On the way, we also suggest regularization terms that one can add to topological losses in order to avoid unwanted behaviors. 
We only present a few applications due to lack of space, and we refer the interested reader to Appendix for more examples.

\paragraph{Point cloud optimization.}

A toy example in persistence optimization is to modify the positions of the points in a point cloud so that its homology is maximized~\cite{gabrielsson2020topology, gameiro2016continuation}. 
In this experiment, we start with a point cloud $X$ sampled uniformly from the unit square $S=[0,1]^2$, and then optimize the point coordinates so that
the loss $\mathcal L(X)=P(X)+T(X)$ is minimized. 
Here $T(X):=-\sum_{p\in D}\|p-\pi_\Delta(p)\|_\infty^2$ is a topological penalty, $D$ is the 1-dimensional persistence diagram associated to the Vietoris-Rips filtration of $X$, $\pi_\Delta$ stands for the projection onto the diagonal $\Delta$, and $P(X):=\sum_{x\in X}d(x,S)$ is a penalty term ensuring that the point coordinates stay in the unit square. 
The topological penalty $T(X)$ was used in~\cite{bruel2019topology}, and 
ensures that holes are created within the point cloud so that the cardinality of the persistence diagram $D$ is as large as possible.
However, we point out  that if one uses $T(X)$ alone without the penalty $P(X)$, as in~\cite{bruel2019topology}, then convergence is very difficult to reach since inflating the point cloud with dilations can make the topological penalty $T(X)$ arbitrarily small. 
In contrast, using our second term $P(X)$ in addition to $T(X)$ constrains the points to stay in a fixed region $S$ of the Euclidean plane. 
Another effect of the penalty $P(X)$ is to flatten the boundary of the created holes along the boundary of $S$. 
See \cref{fig:pc} for an illustration. 

\begin{figure}[!ht]
    \centering
    \includegraphics[width=3.75cm]{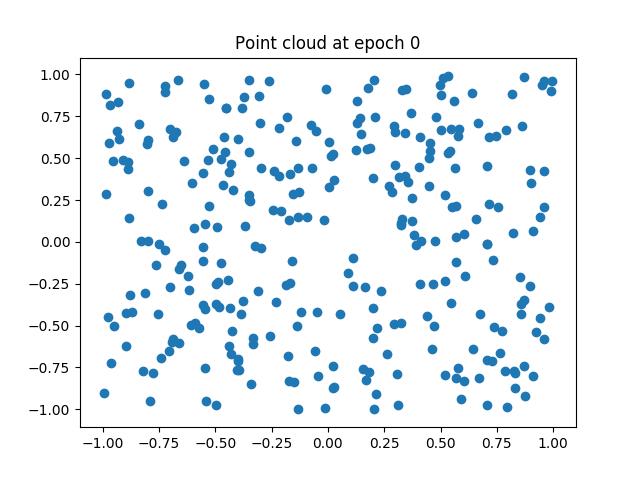}
    \includegraphics[width=3.75cm]{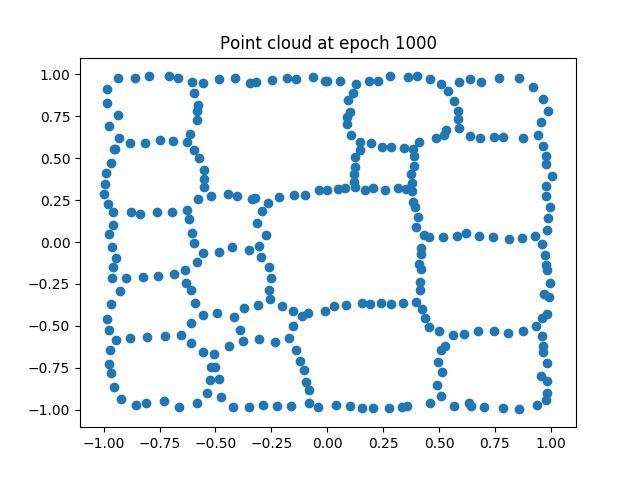}
    \includegraphics[width=3.75cm]{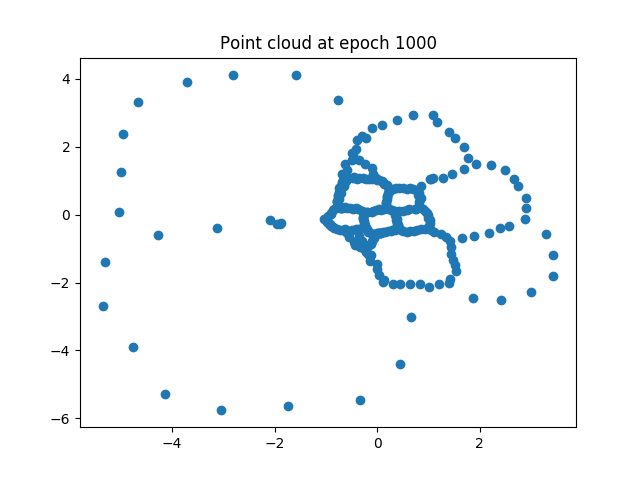}
    \includegraphics[width=3.75cm]{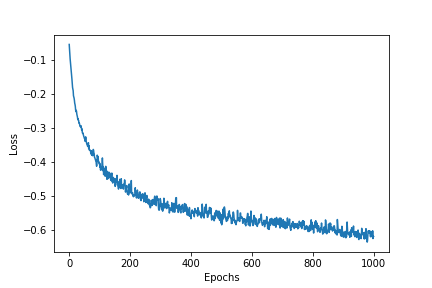}
    \caption{Illustration of point cloud optimization. We initialize with a random point cloud (left), and we show the optimized point cloud when optimization is done with topological and regularization losses (middle left). When only topological loss is used, the optimized point cloud inflated some loops to minimize the loss (middle right). Note how the coordinates are now much larger. We also show the convergence of the total loss (right). }
    \label{fig:pc}
\end{figure}

\paragraph{Dimension reduction.} 

In this experiment, we show how our general setup can be used to reduce dimension with the so-called \emph{topological autoencoders} introduced in~\cite{moor2019topological}. 
In this family of autoencoders, a topological loss $T(X,Z)$ between the input space $X$ and latent space $Z$ is used in addition to the usual loss $\mathcal D(X,Z)=\sum_i \|x_i-z_i\|_2^2$. 
This topological loss was computed in~\cite{moor2019topological} by (i)~computing the permutations induced by the persistence map (see \cref{sec:pers}) of the Vietoris-Rips complexes built from the input space $X$ and the latent space $Z$, (ii)~computing, for each simplex in these permutations, the corresponding edge that induces its filtration value, and (iii)~measuring, for all those edges, 
the differences between the edge lengths in $X$ and the same edge lengths in $Z$. 
To sum up, the loss function is defined as
\begin{equation}
   \mathcal L(X,Z) = \|M_X[\pi_X]-M_Z[\pi_X]\|_2^2 + \|M_X[\pi_Z]-M_Z[\pi_Z]\|_2^2,
\end{equation}
where $M_X,M_Z$ are the distance matrices of the input and the latent spaces respectively, and where $\pi_X,\pi_Z$ denote the indices of the entries in $M_X,M_Z$ that are picked by the permutation induced by the persistence map to generate the Vietoris-Rips persistence diagrams of $X$ and $Z$. 
Note that $\mathcal L$ is obviously semi-algebraic and thus fits in our framework. 
Moreover, in our setup we can directly use the bottleneck and Wasserstein distances between the Vietoris-Rips persistence diagrams of the input and latent spaces as the topological loss. 
This is relevant since in \cite{moor2019topological} the authors  pointed out that looking at homology in dimension larger than 1 was not adding anything for their loss, and stuck to 0-dimensional homology. 
We show in \cref{fig:auto} an example in which 1-dimensional homology is also important, that is, a point cloud in $\mathbb R^3$ that is comprised of two nested circles, which is then 
non-linearly embedded in in $\mathbb R^9$ by converting each point $p=(x,y,z)$ into the exponential of the 3x3 anti-symmetric matrix whose coefficient are $x,y$ and $z$. 
We then train an autoencoder made of four fully-connected layers with $32$ neurons and ReLU activations, using the usual loss, the usual plus the topological loss described above, and the usual plus a topological loss computed as $\mathcal L(X,Z)=W_1(D_X, D_Z)$, i.e., the 1-Wasserstein distance between the 1-dimensional Vietoris-Rips persistence diagrams of the input and latent spaces. 
It can be seen from \cref{fig:auto} that autoencoders without the Wasserstein loss cannot embed the point cloud in the plane perfectly, while using the Wasserstein loss between the 1-dimensional Vietoris-Rips persistence diagrams improves on the result by separating better the two intrinsic circles.

\begin{figure}[!ht]
    \centering
    \includegraphics[width=3.8cm]{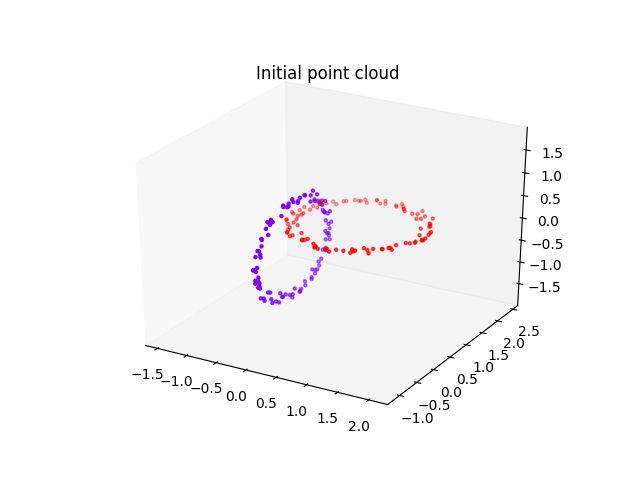}
    \includegraphics[width=3.75cm]{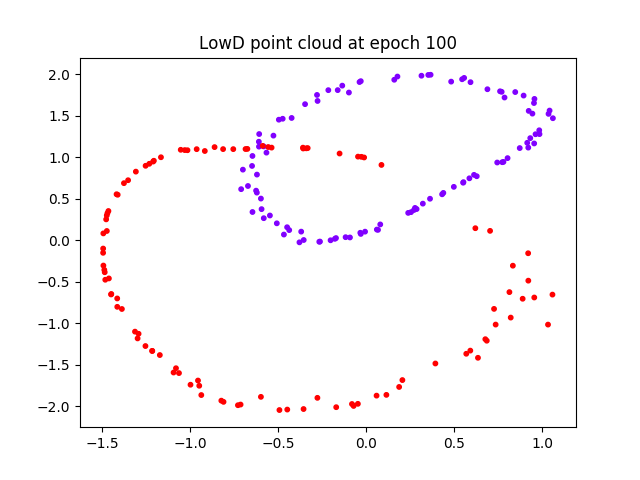}
    \includegraphics[width=3.8cm]{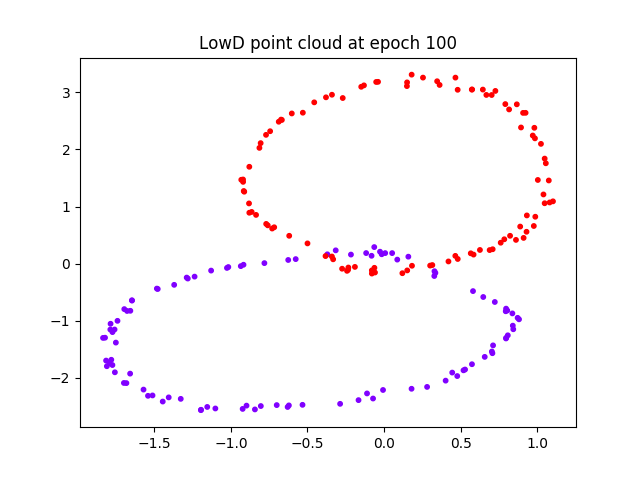}
    \includegraphics[width=3.8cm]{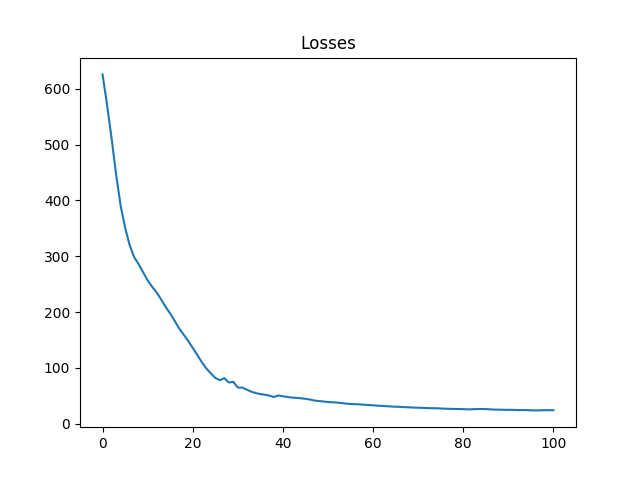}

    \caption{Example of dimension reduction with autoencoders. An initial point cloud made of two circles (left) is embedded in $\mathbb R^9$, and then fed to autoencoders that either do not use topology, or only use the distances induced by the persistence maps in dimension $0$. The resulting embeddings (middle left, we only show one but the two are similar) cannot separate the circles, while using 1-dimensional topology induces a better embedding (middle right). Convergence of the loss function is also provided (right).}
    \label{fig:auto}
\end{figure}

\begin{table*}[!ht]
\begin{minipage}{.5\textwidth}
\raggedleft
\begin{tabular}{c||c||ccc}
Dataset & Baseline & Before & After & Difference \\ 
\hline
\texttt{vs01} & 100.0 & 61.3 & 99.0 & \bf{+37.6} \\ 
\texttt{vs02} & 99.4 & 98.8 & 97.2 & -1.6 \\ 
\texttt{vs06} & 99.4 & 87.3 & 98.2 & \bf{+10.9} \\ 
\texttt{vs09} & 99.4 & 86.8 & 98.3 & \bf{+11.5} \\ 
\texttt{vs16} & 99.7 & 89.0 & 97.3 & \bf{+8.3} \\ 
\texttt{vs19} & 99.6 & 84.8 & 98.0 & \bf{+13.2} \\ 
\texttt{vs24} & 99.4 & 98.7 & 98.7 & 0.0 \\ 
 \texttt{vs25} & 99.4 & 80.6 & 97.2 & \bf{+16.6} \\ 
\end{tabular}
\end{minipage}
\begin{minipage}{.5\textwidth}
\raggedright
\begin{tabular}{c||c||ccc}
Dataset & Baseline & Before & After  & Difference \\ 
\hline
\texttt{vs26} & 99.7 & 98.8 & 98.2 & -0.6 \\ 
 \texttt{vs28} & 99.1 & 96.8 & 96.8 & 0.0 \\ 
 \texttt{vs29} & 99.1 & 91.6 & 98.6 & \bf{+7.0} \\ 
\texttt{vs34} & 99.8 & 99.4 & 99.1 & -0.3 \\ 
 \texttt{vs36} & 99.7 & 99.3 & 99.3 & -0.1 \\ 
 \texttt{vs37} & 98.9 & 94.9 & 97.5 & \bf{+2.6} \\ 
\texttt{vs57} & 99.7 & 90.5 & 97.2 & \bf{+6.7} \\ 
\texttt{vs79} & 99.1 & 85.3 & 96.9 & \bf{+11.5} \\ 
\end{tabular}
\end{minipage}

\caption{\label{tab:scores} Accuracy scores obtained from persistence diagrams before and after performing our optimization over the filtration. Note that the difference between the scores is almost always positive, i.e., there is almost always improvement after our optimization process. Scores 
do not have standard deviations since we use the train/test splits of the \texttt{mnist.load\_data} function in TensorFlow 2.
}
\end{table*}

\paragraph{Filter selection.} 

In this experiment, we address a very common issue in Topological Data Analysis, filter selection. 
Indeed, when computing persistence diagrams in order to generate topological features from a data set for further data analysis, the filter function that is being used to filter the data set always has to be specified a priori. 
Here, we provide a very simple heuristic to tune it if it comes from a parametrized family $\mathcal F$ of filters and if the learning task is supervised, which is the case in, e.g., classification. We simply start from a random guess in $\mathcal F$ and then optimize the following criterion, which is inspired from \cite{Zhao2019}:
\begin{equation}\label{eq:classloss}
    \mathcal L(f) = \sum_{l=1}^N \frac{\sum_{i,j:y_i=y_j=l} W_p(D_i(f),D_j(f))}{\sum_{i,j:y_i=l} W_p(D_i(f),D_j(f))}, 
\end{equation}
which amounts to minimize the distances between persistence diagrams that share the same label, and increase the distances between persistence diagrams with different labels. 
Note that the batch size that we use in this optimization process has a big influence on the computation time, since the larger the batch size, the more Wasserstein distances we will have to compute in our cost. 
To cope with this issue, we actually used the Sliced Wasserstein distance $SW$~\cite{Carriere2017b} instead of $W_p$, which, since it is computed with projections onto lines, can be defined entirely with matrix operations that are usually available in any library with autodifferentiation. 
This drastically improves on computation time, while remaining in our framework since the Sliced Wasserstein distance is also a semi-algebraic function.

We classify images from the \texttt{MNIST} data set. We assign values to the pixels using a height function given by a direction (parametrized by an angle in the Euclidean plane), and we use 0-dimensional persistence diagrams computed after optimizing this direction using loss~\eqref{eq:classloss}. 
See \cref{fig:mnist}.

\begin{figure}[!ht]
    \centering
    \includegraphics[width=7.5cm]{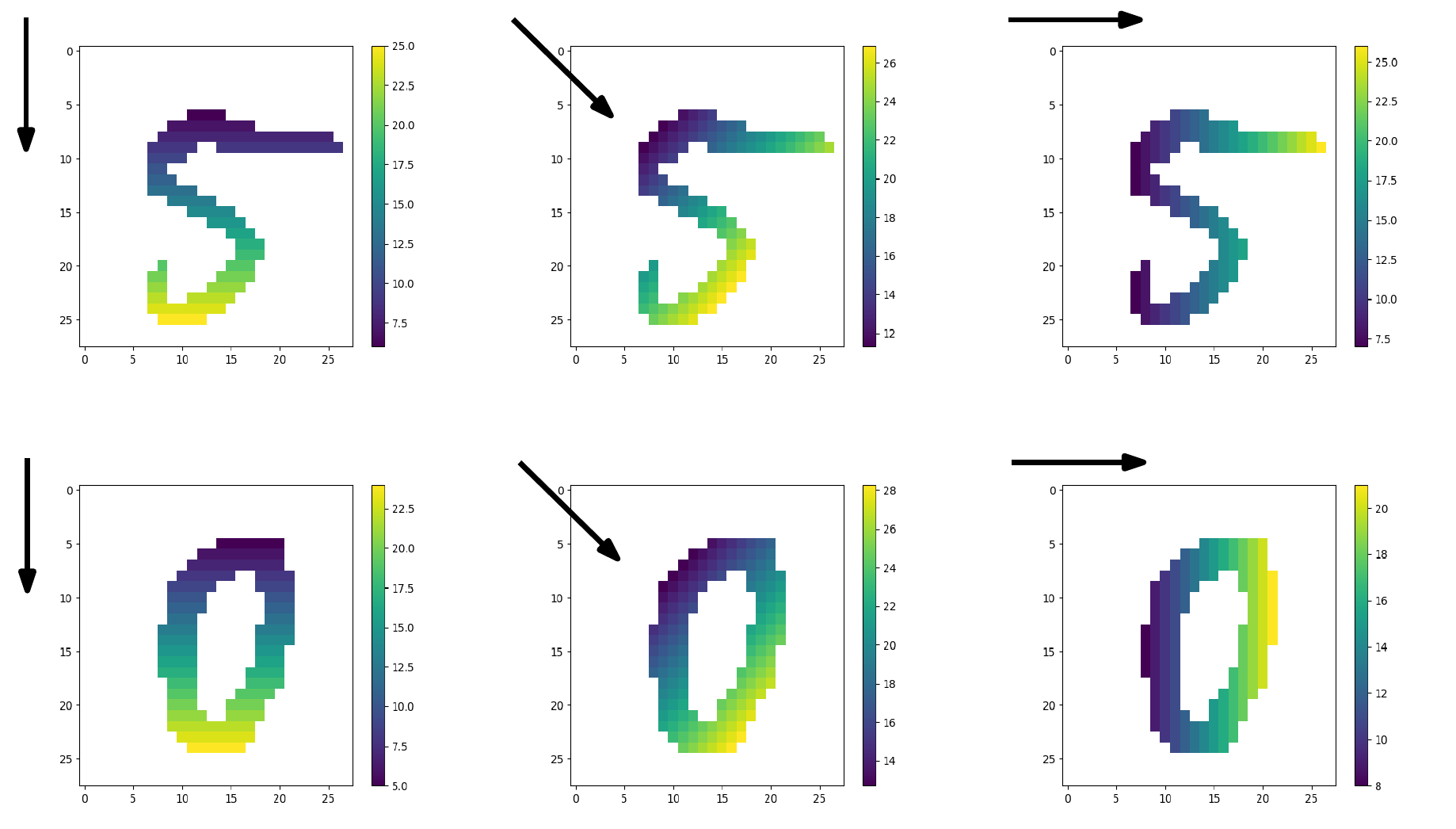}
    \includegraphics[width=7cm]{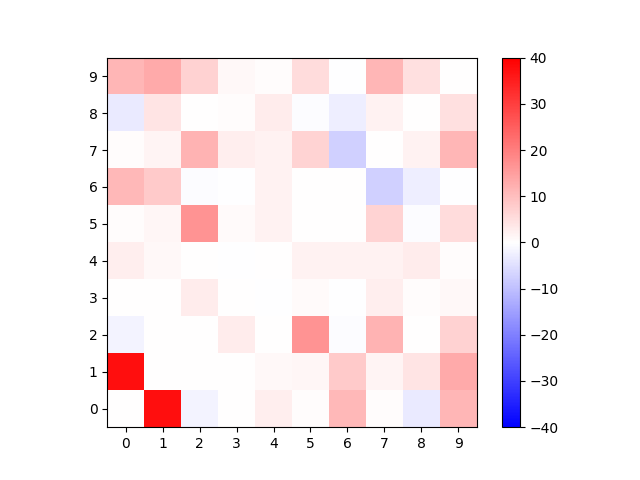}   
    \caption{{\bf Left:} Example of images and directions inducing different height functions. Different directions generate different height functions and filtrations and thus different persistence diagrams. In this experiment, we optimize over the direction so that the persistence diagrams are the most efficient for image classification. {\bf Right:} Pairwise differences between the accuracy scores (\%) of optimized and non optimized filtrations for classifying digit $x$ vs. digit $y$. One can see that the difference is almost always positive.}
    
    \label{fig:mnist}
\end{figure}

We then compute the accuracy scores obtained with a random forest classifier for the (binary) classification tasks digit $x$ vs. digit $y$ for all pairs $0\leq x,y \leq 9$,  using the first five persistence landscapes with resolution $100$ associated to the persistence diagrams before and after optimization. 
Even though our primary goal is to demonstrate that optimizing the filter almost always lead to an improvement, we also add a baseline score obtained by training a random forest classifier directly on the images for proper comparison. 
Some of the scores are displayed in \cref{tab:scores} (the full table can be found in Appendix), and all pairwise scores are shown in \cref{fig:mnist}. 
Interestingly, when starting with a random direction, scores can be much worse than the baseline, but our optimization process is then able to select the best direction that induces the best persistence diagrams (with respect to the classification task) without prior knowledge on the data set.

\section{Conclusion} \label{sec:conclusion}

In this article we introduced a theoretical framework that encompasses most of the previous methods for optimizing topology-based functions. 
In particular, we obtained convergence results for very general classes of functions with topological flavor computed with persistence theory, and provided corresponding code that one can use to reproduce previously introduced topological optimization tasks. 
For future work, we are planning to further investigate tasks related to classifier regularization in ML~\cite{chen2019topological}, and to improve on computation time using, e.g., vineyards~\cite{Cohen-Steiner2006}.

\clearpage

\bibliographystyle{alpha}
\bibliography{references}

\newpage

\appendix

\section*{Proof of Proposition 3.4}

{\bf Proposition 3.4.}{\em
Given a simplicial complex $K$, the map $\mathrm{Pers} \colon \mathrm{Filt}_K \subseteq \bR^{|K|} \to  \bR^{|K|}$ is semi-algebraic, and thus definable in any o-minimal structure. Moreover, there exists a semi-algebraic partition of $\mathrm{Filt}_K$ such that the restriction of $\mathrm{Pers}$ to each element of this partition is a Lipschitz map. }

\begin{proof}
As $K$ is finite, the number of preorders on the simplices of $K$ is finite. Let $\preceq$ be a preorder on simplices of $K$ induced by some equalities and inequalities between the coordinates of $\bR^{|K|}$. 
Then, the set of filtrations $F \in \mathrm{Filt}_K$ such that $F$ gives rise to a preorder equal to $\preceq$ is a semi-algebraic set. 
Thus, $\mathrm{Filt}_K$ is a semi-algebraic set that admit a semi-algebraic partition such that the restriction of the persistence map $\mathrm{Pers}$ to each component is a constant permutation. 
As a consequence, on each open element of this partition, the partial derivatives of $\mathrm{Pers}$ are equal constant equal to $0$ or $1$.  

Now, from the stability theorem for persistence \cite{chazal2012structure}, the persistence modules induced by two filtrations $F_1,F_2 \in \mathrm{Filt}_K$ are $\epsilon$-interleaved where $\epsilon$ is the sup norm of the vector $F_2 - F_1$. 
As a consequence, the restriction of $\mathrm{Pers}$ to each component of the above semi-algebraic partition of $\mathrm{Filt}_K$ is $1$-Lipschitz with respect to the sup norm in $\bR^{|K|}$. 
\end{proof}

\section*{Other examples of definable families of filtrations}

\paragraph{The \v Cech and alpha-complex filtrations}
The \v Cech complex filtration built on top  of sets of $n$ points $x_1, \dots, x_n \in \bR^d$ is the semi-algebraic parametrized family of filtrations 
\begin{align}
    \Phi \colon A = (\bR^d)^n  \to  \bR^{|\Delta_n|} = \bR^{2^n-1},
\end{align}
where $\Delta_n$ is the simplicial complex made of all the faces of the $(n-1)$-dimensional simplex, defined, for any $(x_1, \dots, x_n) \in A$ and any simplex $\sigma \subseteq \{1, \dots, n\}$, by 
\begin{align}
    \Phi_\sigma(x) = \min \left\{ r \geq 0 \;:\;  \bigcap_{j \in \sigma} B(x_j,r) \not = \emptyset \right\}
\end{align}
where $B(x_j,r) = \{ x \in \bR^d: \| x - x_j \| \leq r \}$ is the ball of center $x_j$ and radius $r$. The filtrations naturally satisfy the Lipschitz property.
    
Note that the \v Cech complex filtration is closely related to the so-called alpha-complex filtration which is a filtration of the Delaunay triangulation of the set of points $x_1, \dots, x_n$---see, e.g., Chapter~6 in \cite{boissonnat2018geometric} for a definition. 
The simplicial complex on which the alpha-complex filtration is built depends on the points $x_1, \dots, x_n$. 
However, if $A$ is a connected component of the open semi-algebraic subset of $(\bR^d)^n$ of the points $x_1, \dots, x_n$ that are in general position\footnote{The points $x_1, \dots, x_n \in \bR^d$ are in general position if any subset of size at most $d+1$ is a set of affinely independent points, and if no subset of $d+2$ points lies on the same $(d-1)$-dimensional sphere}, then all the points in $A$ have the same Delaunay triangulation. 
In that case the alpha-complex defines a semi-algebraic parametrized family of filtrations. 

Moreover, the persistence diagram of the alpha-complex filtration built on top of $x_1, \dots, x_n \in \bR^d$ is the same as the persistence diagram of the \v Cech complex filtration built on the same set of points \cite{edelsbrunner1993union,bauer2017morse} if we ignore points on the diagonal. 

\paragraph{Cubical complexes}
While this presentation was restricted to simplicial complexes for simplicity, the same properties apply for more general complexes. 
Indeed, as long as the boundary maps are well-defined and associate to each cell of dimension $d$ a chain of dimension $d-1$, the same persistence algorithm can be run, and its output is still a permutation of its input. 
The most common non-simplicial complexes are the so-called cubical complexes, where cells are cubes. 
They are particularly well suited to represent images, or (discretized) functions on $\bR^d$.

\section*{Other examples of locally Lipschitz functions of persistence}

\paragraph{Persistence images \cite{adams2017persistence}}
Given a weight function $w \colon \bR^2 \to \bR^{+}$ and a real number $\sigma >0$, the persistence image (also called the persistence surface) associated to a persistence diagram $D$ is the function $\rho_D \colon \bR^2 \to \bR$ defined by
\begin{align}
    \rho_D(q) = \sum_{p \in D} w(p) \exp\left(-\frac{\|q-p\|^2}{2 \sigma^2}\right).
\end{align}
The definition of $\rho_D$ only involves algebraic operations, the weight function $w$, and exponential functions. 
Thus, it follows from \cite{wilkie1996model} that if $w$ is a semi-algebraic function, $K$ is a simplicial complex, and $\{q_1, \dots, q_n \}$ is a finite set of points in $\bR^2$, then the map that associates the vector $(\rho_D(q_1), \dots, \rho_D(q_n))$ to each persistence diagram $D$ of a filtration of $K$ is definable in some o-minimal structure. 
In \cite{adams2017persistence}, the authors provide explicit conditions under which the map from persistence diagrams to persistence surfaces is Lipschitz.

\section*{A remark about the locally Lipschitz condition in Theorem 4.2}

The convergence Theorem 4.2 provides explicit conditions ensuring the convergence of stochastic gradient descent for functions of persistence. 
The main criterion to be checked is the local Lipschitz condition for $\mathcal{L}$. 
From the remark following Definition 4.1, it is sufficient to check that $\Phi$ and $E$ are Lipschitz. Regarding $\Phi$, it is obvious for Vietoris-Rips, \v Cech, DTM, etc.\ filtrations on finite point clouds, but wrong for the alpha-complex filtration, 
which is not a locally Lipschitz function of the coordinates of the points. 
Indeed, one can take 3 points within a bounded region of the plane that are almost aligned and whose circumradius is arbitrarily large, and this circumradius is the filtration value of the triangle. 
However, by comparing with the \v Cech complex, we know that such large values always correspond to diagonal points of the persistence diagram. 
In our exemple of 3 points, the longest edge has the same filtration as the triangle and they are paired by the persistence algorithm. 
To handle alpha-complex filtrations, we need to restrict to functions of persistence that are defined for various numbers of points, ignore points on the diagonal (the image of a diagram is the same if we add or remove points on the diagonal), and are still locally Lipschitz. 
This is the case for all the functions of persistence presented in this paper. 
Composed with such a function, the difference between \v Cech and alpha-complex filtrations disappears, it becomes an implementation detail and all the differentiation and optimization properties proved for the first apply to the second.


\section*{More numerical experiments}

\paragraph{3D shape processing.} 

In~\cite{poulenard2018topological}, persistence optimization is used for modifying functions defined on 3D shapes. 
More specifically, given a 3D shape $S$, one is interested in optimizing a function $F \colon V(S)\rightarrow \mathbb{R}$ defined on the vertices $V(S)$ of $S$, so that the Wasserstein distance between the persistence diagram associated to $F$ and $D^*$ is minimized, where $D^*$ is a target persistence diagram (which either comes from another function $G \colon S\rightarrow\mathbb{R}$, or is defined by the user). 
In this experiment, we minimize the loss $\mathcal L(F)=T(F)$, where $T(F):=W_2(D,D^*)^2$, that is, the Wasserstein distance between the 0-dimensional persistence diagram $D$ associated to the sublevel set of a function $F$---initialized with the height function, see \cref{fig:sh} (up)---of a human 3D shape, and a target persistence diagram $D^*$ which only contains a single point, with the same coordinates than the point (in the persistence diagram of the height function of the shape $S$) corresponding to the right leg.
This makes the function values to force the two points in $D$ corresponding to the hands of $S$ to go to the diagonal, by creating paths between the hands and the hips (middle). It is worth noting that these path creations come from the fact that we only used a topological penalty in the loss: in~\cite{poulenard2018topological}, the authors ensure smoothness of the resulting function by forcing it to be a linear combination of a first few eigenfunctions of the Laplace-Beltrami operator on the 3D shape. 
We also display the sequence of optimized persistence diagrams in \cref{fig:sh} (lower row), from which it can be observed that the optimization is piecewise-linear, which reflects the fact that the persistence map has an associated semi-algebraic partition, as per Proposition~3.2. 

\begin{figure}[!ht]
    \centering
    \includegraphics[width=6cm]{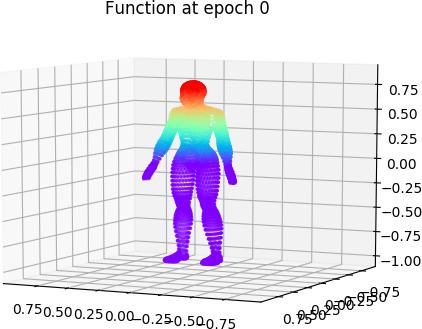}
    \includegraphics[width=6cm]{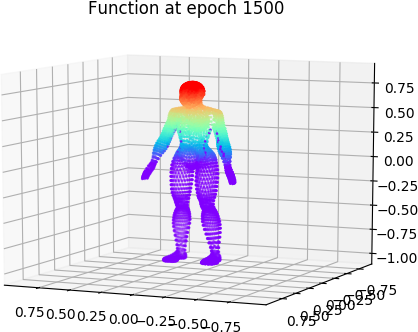}
    \includegraphics[width=5cm]{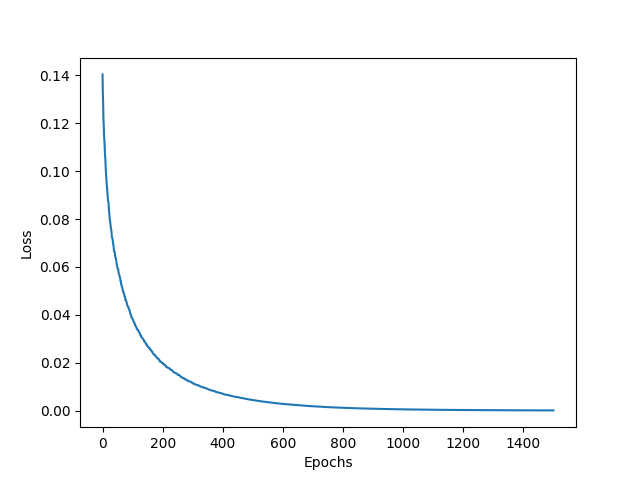}
    \includegraphics[width=5cm]{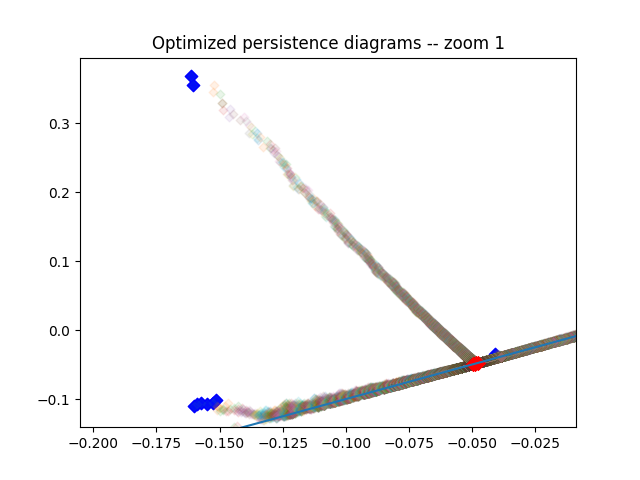}
    \includegraphics[width=5cm]{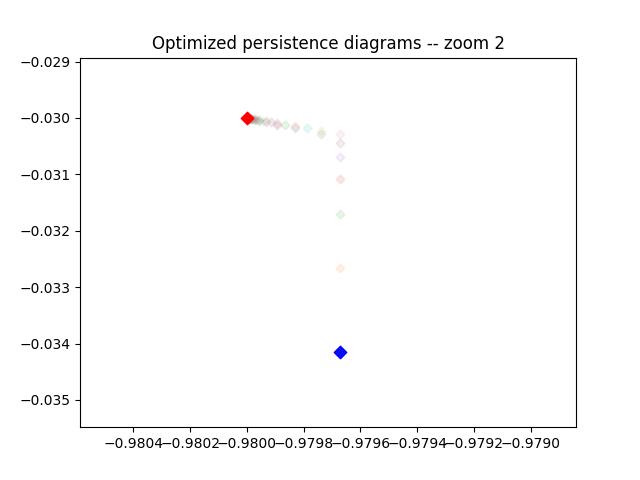}
    \caption{3D shape before (upper left) and after (upper right) optimization, and corresponding loss function (lower left). Note how paths of low function values were created between the hips and the hands.
    We also show the sequence of persistence diagrams (lower middle and right) with blue points being the initial persistence diagram, and red ones being the fully optimized persistence diagram.
    Note how optimization trajectories look piecewise-linear.}
    \label{fig:sh}
\end{figure}

\paragraph{Image processing.} 

Another task presented in~\cite{gabrielsson2020topology} is related to image processing. 
In this experiment, we optimize the 0-dimensional homology associated to the pixel values of a digit binary image $I$ with noise (see \cref{fig:im}, upper left). 
Since noise can be detected as unwanted small connected components, we use the loss $\mathcal L(I)=P(I)+T(I)$, where $T(I):=\sum_{i=1}^p|d_i-b_i|$ is the total persistence penalty, $D_{\reg}=\{(b_1, d_1),\dots, (b_p, d_p)\}$ is the finite 0-dimensional persistence diagram of the cubical complex associated to $I$, and $P(I):=\sum_{p\in I}{\rm min}\{|p|,|1-p|\}$ is a penalty forcing the pixel values to be either $0$ or $1$. 
As can be seen from \cref{fig:im}, using the two penalties is essential: if only $P(I)$ is used, the noise is amplified (lower left), and if only $T(I)$ is used, the noise does not quite disappear, and paths are created between the noise and the central digit to ensure the corresponding connected components are merged right after they appear (lower middle). 
Note that this funny behavior that appears when penalizing topology alone is similar to what was observed in experiments where persistence was used to simplify functions~\cite{attali2009persistence}. 
Using both penalties completely remove the noise (lower right) in two steps: firstly topology is optimized, and then the paths created by optimizing $T(I)$ are removed by optimizing $P(I)$. 
This two step process can also be observed on the loss function (upper middle) and the sequence of optimized persistence diagrams of the image (upper right), where a bifurcation point can be observed in the optimization process. 


\begin{figure}[!ht]
    \centering
    \includegraphics[width=5cm]{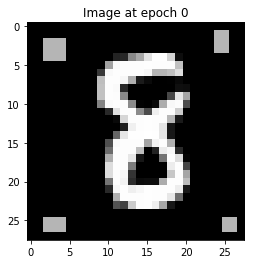}
    \includegraphics[width=5cm]{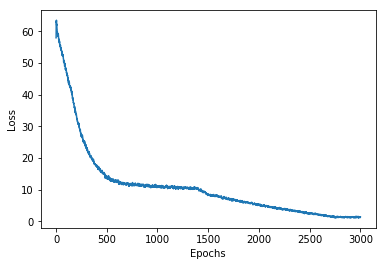}
    \includegraphics[width=5cm]{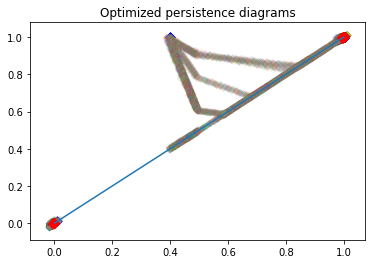}
    \includegraphics[width=5cm]{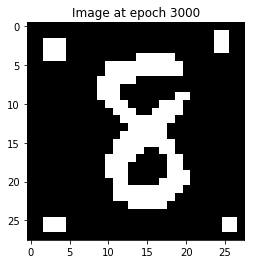}
    \includegraphics[width=5cm]{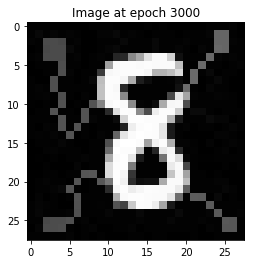}
    \includegraphics[width=5cm]{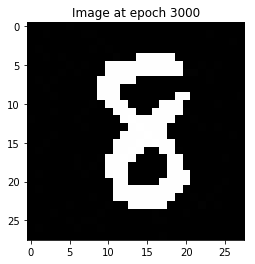}
    
    \caption{Image before (upper left) and after optimization for various penalties. When both $T(I)$ and $P(I)$ are used (lower right), we also show the corresponding loss function (upper middle) and the sequence of persistence diagrams (upper right) with blue points being the initial persistence diagram, and red ones being the fully optimized persistence diagram.}
    \label{fig:im}
\end{figure}

\paragraph{Linear regression.} 


This experiment comes from~\cite{gabrielsson2020topology}, in which the authors use persistence optimization in a linear regression setting. 
Given some dataset $X\in\mathbb{R}^{n\times p}$ and ground-truth associated values $Y\in\mathbb{R}^n$ computed as $Y=X\cdot\beta^* + \epsilon$, where $\beta^*\in\mathbb{R}^p$ is the vector of ground-truth coefficients and $\epsilon$ is some high-dimensional Gaussian noise, one can leverage some prior knowledge on the shape of $\beta$ to penalize the coefficients with bad topology. 
In particular, when using $\beta^*$ with three peaks, as in \cref{fig:re} (left), we use the loss $\mathcal L(\beta)=P(\beta) + TV(\beta) + T(\beta)$, where $T(\beta):=\sum_{i=1}^p|d_i-b_i|$, $\tilde D=\{(b_1,d_1),\dots,(b_p,d_p)\}$ is the 0-dimensional persistence diagram of the sublevel sets of $\beta$, minus the three most prominent points, 
$TV(\beta)=\sum_i|\beta_{i+1}-\beta_{i}|$ is the usual total variation penalty (which can also be interpreted as a topological penalty as it corresponds to the total persistence of the so-called \emph{extended persistence}~\cite{cohen2009extending} of the signal), and $P(\beta):=\sum_{i} |x_i\cdot\beta-y_i|^2$ is the usual mean-square error (MSE).
We optimized $\beta$ with the MSE alone, then MSE plus total variation, then MSE plus total variation plus topological penalty, and we generated new MSE values from new randomly generated test data, see \cref{fig:re} (right). 
It is interesting to note that using all penalties lead to the best result: using MSE alone leads to overfitting, and adding total variation smooths the coefficients a lot since $\beta$ is initialized with random values. 
Using all penalties ends up being a right combination of minimizing the error, smoothing the signal, and getting to the right shape of $\beta$.

\begin{figure}[!ht]
    \centering
    \includegraphics[width=5cm]{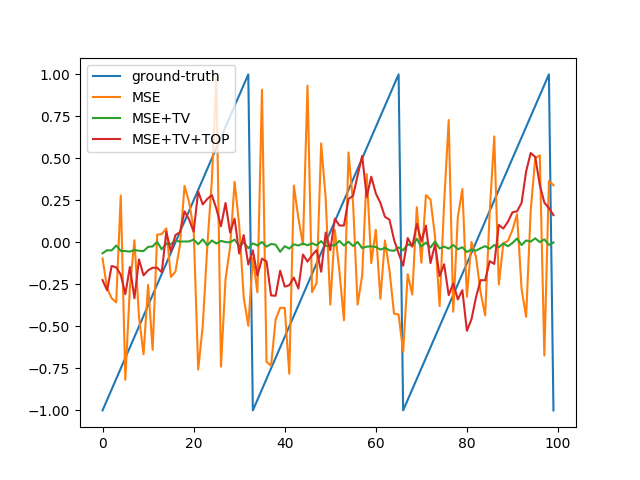}
    \includegraphics[width=5cm]{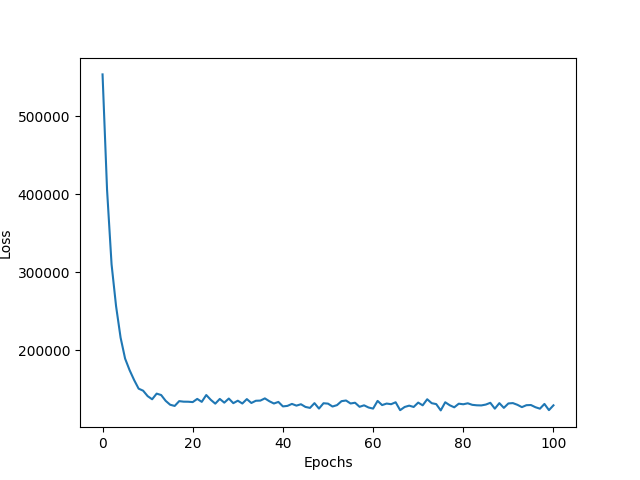}
    \includegraphics[width=5cm]{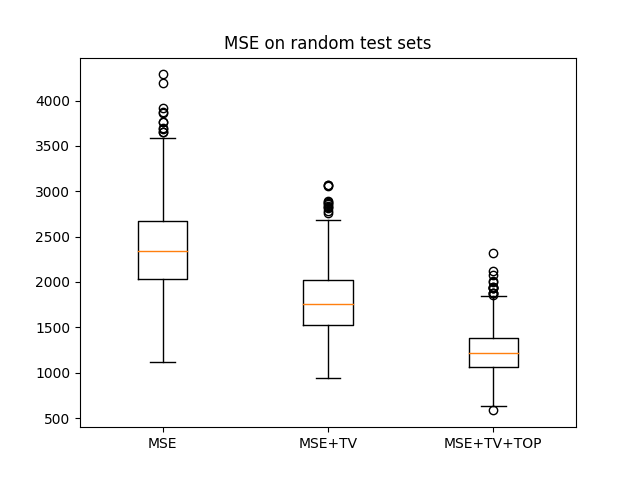}
    \caption{Regression coefficients after optimization for various penalties (left), and corresponding loss function when all penalties are used (middle). Generalization performance is increased by using all penalties, since the MSE on various randomly generated test sets is largely decreased (right).}
    \label{fig:re}
\end{figure}

\paragraph{Noisy point cloud.} 

We perform another point cloud optimization (like in the article), but now we start with a noisy sample $X$ of the circle with three outliers and use the loss $\mathcal L(X)=W_2(D,D^*)^2$, where $W_2$ stands for the Wasserstein distance, $D$ is the 0-dimensional persistence diagram associated to the Vietoris-Rips filtration of $X$, and $D^*$ is the  0-dimensional persistence diagram associated to the Vietoris-Rips filtration of a clean (i.e., with no noise nor outliers) sample of the circle. 
See \cref{fig:noisypc}. 
Note that when one does not use extra penalties, optimizing only topological penalties can lead to funny effects: as one can see on the middle of \cref{fig:noisypc}, the circle got disconnected, and one of the outliers created a small path out of the circle during optimization.

\begin{figure}[!ht]
    \centering
    \includegraphics[width=5cm]{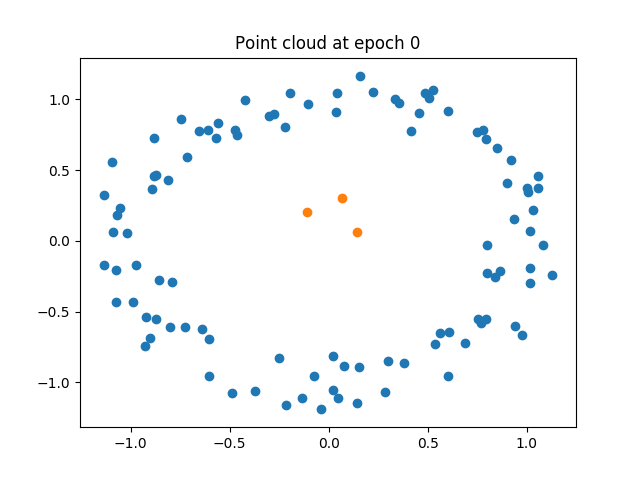}
    \includegraphics[width=5cm]{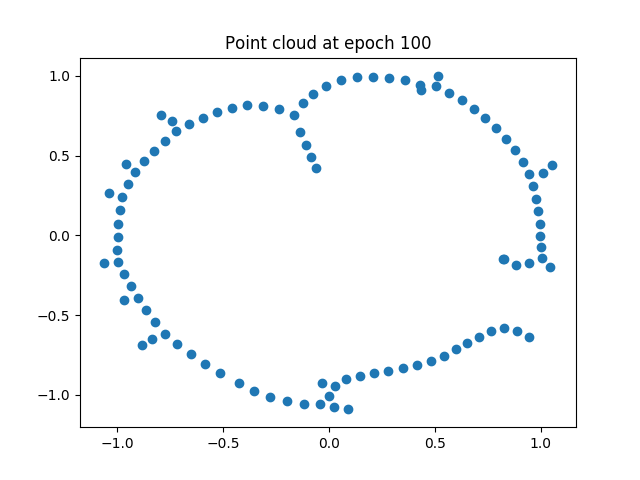}
    \includegraphics[width=5cm]{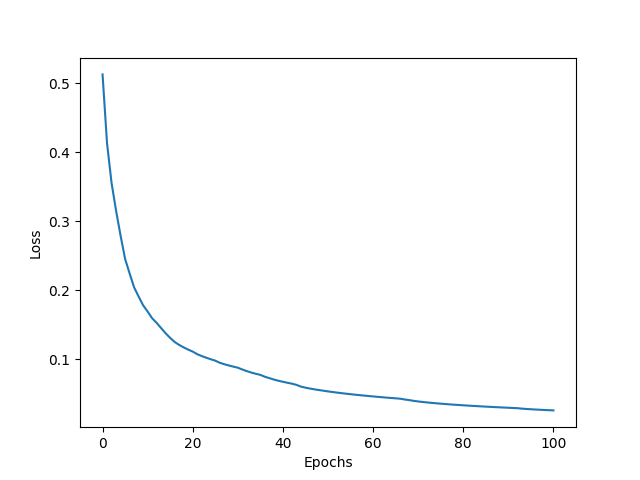}
    \caption{Noisy circle initialization with outliers, before (left) and after (middle) optimization, and corresponding loss function (right).}
    \label{fig:noisypc}
\end{figure}

\paragraph{Filter optimization.}

In addition to classifying digits (as presented in the article), we also run filter selection on a second family of data sets, which is comprised of graphs. For each graph $G$, we compute its normalized Laplacian $L(G)$ and we define a family of filtrations by consider all filter functions that can be written as linear combinations of the eigenvectors of $L(G)$. Then we use persistence landscapes and random forest classifiers (as in the article) to generate accuracy scores. We also train a baseline random forest classifier directly on the eigenvalues of the graph Laplacians. 
The results are displayed (as well as the full results for the \texttt{MNIST} experiment) in \cref{tab:fscores}.

\begin{table*}[!ht]
\centering
\centering
\begin{tabular}{c||c||ccc}
Dataset & Baseline & Before & After & Difference \\ 
\hline
\texttt{all} & 96.9 & 65.1 & 80.6 & \bf{+15.5} \\ 
\texttt{vs01} & 100.0 & 61.3 & 99.0 & \bf{+37.6} \\ 
\texttt{vs02} & 99.4 & 98.8 & 97.2 & -1.6 \\ 
 \texttt{vs03} & 99.8 & 99.1 & 99.2 & \bf{+0.1} \\ 
\texttt{vs04} & 99.9 & 96.0 & 98.8 & \bf{+2.8} \\ 
\texttt{vs05} & 99.6 & 95.7 & 96.3 & \bf{+0.6} \\ 
\texttt{vs06} & 99.4 & 87.3 & 98.2 & \bf{+10.9} \\ 
\texttt{vs07} & 99.8 & 97.4 & 98.0 & \bf{+0.6} \\ 
\texttt{vs08} & 99.4 & 90.4 & 87.0 & -3.4 \\ 
 \texttt{vs09} & 99.4 & 86.8 & 98.3 & \bf{+11.5} \\ 
\texttt{vs12} & 99.6 & 98.3 & 98.5 & \bf{+0.2} \\ 
\texttt{vs13} & 99.7 & 98.9 & 99.1 & \bf{+0.2} \\ 
\texttt{vs14} & 100.0 & 97.1 & 98.3 & \bf{+1.2} \\ 
\texttt{vs15} & 99.8 & 96.7 & 98.0 & \bf{+1.3} \\ 
\texttt{vs16} & 99.7 & 89.0 & 97.3 & \bf{+8.3} \\ 
\texttt{vs17} & 99.7 & 96.8 & 98.6 & \bf{+1.8} \\ 
\texttt{vs18} & 99.8 & 91.7 & 96.0 & \bf{+4.3} \\ 
\texttt{vs19} & 99.6 & 84.8 & 98.0 & \bf{+13.2} \\ 
\texttt{vs23} & 99.1 & 95.2 & 98.0 & \bf{+2.9} \\ 
\texttt{vs24} & 99.4 & 98.7 & 98.7 & 0.0 \\ 
 \texttt{vs25} & 99.4 & 80.6 & 97.2 & \bf{+16.6} \\ 
\texttt{vs26} & 99.7 & 98.8 & 98.2 & -0.6 \\ 
 \texttt{vs27} & 98.6 & 80.1 & 91.9 & \bf{+11.8} \\ 
\texttt{vs28} & 99.1 & 96.8 & 96.8 & 0.0 \\ 
 \texttt{vs29} & 99.1 & 91.6 & 98.6 & \bf{+7.0} \\ 
\texttt{vs34} & 99.8 & 99.4 & 99.1 & -0.3 \\ 
 \texttt{vs35} & 99.2 & 93.5 & 94.3 & \bf{+0.8} \\ 
\texttt{vs36} & 99.7 & 99.3 & 99.3 & -0.1 \\ 
 \texttt{vs37} & 98.9 & 94.9 & 97.5 & \bf{+2.6} \\ 
\texttt{vs38} & 99.0 & 98.3 & 98.8 & \bf{+0.6} \\ 
\texttt{vs39} & 98.8 & 96.8 & 97.8 & \bf{+1.0} \\ 
\texttt{vs45} & 99.9 & 96.5 & 98.4 & \bf{+1.9} \\ 
\texttt{vs46} & 99.6 & 94.1 & 96.0 & \bf{+1.9} \\ 
\texttt{vs47} & 99.7 & 97.2 & 99.3 & \bf{+2.1} \\ 
\texttt{vs48} & 99.2 & 90.4 & 93.4 & \bf{+3.0} \\ 
\texttt{vs49} & 98.4 & 93.7 & 94.2 & \bf{+0.5} \\ 
\texttt{vs56} & 99.0 & 96.9 & 97.1 & \bf{+0.2} \\ 
\texttt{vs57} & 99.7 & 90.5 & 97.2 & \bf{+6.7} \\ 
\texttt{vs58} & 98.9 & 92.7 & 92.3 & -0.4 \\ 
 \texttt{vs59} & 99.4 & 90.0 & 95.4 & \bf{+5.5} \\ 
\texttt{vs67} & 99.7 & 98.4 & 91.0 & -7.4 \\ 
 \texttt{vs68} & 98.7 & 92.2 & 89.5 & -2.7 \\ 
 \texttt{vs69} & 99.7 & 87.0 & 86.7 & -0.3 \\ 
 \texttt{vs78} & 98.9 & 95.7 & 97.6 & \bf{+1.9} \\ 
\texttt{vs79} & 99.1 & 85.3 & 96.9 & \bf{+11.5} \\ 
\texttt{vs89} & 98.7 & 84.2 & 89.1 & \bf{+4.9} \\ 
\texttt{PROTEINS} & 73.6 $\pm$ 3.23 & 68.7 $\pm$ 2.38 & 69.8 $\pm$ 3.42 & \bf{+1.1} \\ 
\texttt{MUTAG} & 85.1 $\pm$ 7.06 & 76.1 $\pm$ 6.31 & 81.3 $\pm$ 6.25 & \bf{+5.2} \\ 
\texttt{COX2} & 78.6 $\pm$ 1.73 & 77.5 $\pm$ 2.29 & 76.9 $\pm$ 3.04 & -0.6 \\ 
 \texttt{DHFR} & 78.8 $\pm$ 4.12 & 61.6 $\pm$ 4.85 & 60.3 $\pm$ 5.26 & -1.3 \\ 
 \texttt{BZR} & 84.9 $\pm$ 2.08 & 78.8 $\pm$ 3.73 & 77.5 $\pm$ 3.93 & -1.2 \\ 
 \texttt{FRANKENSTEIN} & 69.7 $\pm$ 1.40 & 63.3 $\pm$ 2.26 & 63.1 $\pm$ 2.20 & -0.3 \\ 
 \texttt{IMDB-MULTI} & 49.3 $\pm$ 3.26 & 40.6 $\pm$ 2.74 & 39.7 $\pm$ 3.21 & -0.9 \\ 
 \texttt{IMDB-BINARY} & 72.8 $\pm$ 4.45 & 60.1 $\pm$ 3.99 & 60.0 $\pm$ 4.36 & -0.1 \\ 
 \texttt{NCI1} & 74.3 $\pm$ 1.81 & 60.3 $\pm$ 1.27 & 61.4 $\pm$ 1.78 & \bf{+1.0} \\ 
\texttt{NCI109} & 72.5 $\pm$ 1.69 & 60.2 $\pm$ 2.33 & 61.2 $\pm$ 2.87 & \bf{+1.0} \\ 
\end{tabular}

\caption{\label{tab:fscores} All accuracy scores for graphs and \texttt{MNIST} data sets.}
\end{table*}

It should be noted now that even though accuracy scores are sometimes significantly lower than the baseline, optimizing the filter prior to classification often improves the results, and if the score does not improve (which can be due to the fact that the batch size, taken as 150 so that computation runs in a reasonable amount of time, is too small to properly ensure convergence), it does not decrease by a significant margin either.

\section*{Details on experiments}

All experiments were run on Intel dual-Xeon SP with 10 cores and 9.6GB/core (RAM). 
The Classification results for the experiments on graphs were generated with 10-fold cross-validation: each data set was divided into ten 90-10 train-test splits, and results were averaged over these splits. 
The Classification results for the experiments on images were generated using train-test splits provided in \texttt{Tensorflow}.
We also added a multi-class classification task, called {\tt all}, which consists in jointly classifying all images (and not only digit $x$ vs. digit $y$). 
All the optimization processes (including those presented in the article) were done with stochastic gradient descent as implemented in \texttt{Tensorflow} 2.4.1. 
More specifically, we used the \texttt{SGD} optimizer class with \texttt{InverseTimeDecay} learning rate (in order to satisfy Assumptions~1, 2 and 3 of Section~4.1 in the article). We also used \texttt{Adam} optimizer with \texttt{ExponentialDecay} learning rate for some experiments since we noticed empirically that, even though Assumption 1 was not satisfied, the results were not very different, and convergence was slightly smoother. 
Parameter initialization was done randomly, and the batch sizes + numbers of epochs were taken sufficiently large so that convergence was reached in each illustration---see \url{https://github.com/MathieuCarriere/difftda.git} for exact values.  For instance, filter selection was done with initial learning rate equal to $0.001$ and batch size equal to $150$.

\end{document}